\documentclass[sumlimits,intlimits,namelimits]{amsart}

\usepackage{bm}

\usepackage{amssymb,latexsym}
\usepackage[numeric,initials,bibtex-style,nobysame]{amsrefs}
\usepackage{amsthm}

\usepackage{a4wide}
\usepackage[american]{babel}

\usepackage{eucal}
\usepackage{mathrsfs}

\usepackage[dvipsnames]{xcolor}


\def\pg{\mathhexbox278}



\newcommand{\Cinf}{\ensuremath{\mathcal{C}^\infty}}

\newcommand{\D}{\ensuremath{\mathcal{D}}}
\newcommand{\G}{\ensuremath{\mathcal{G}}}
\renewcommand{\S}{\mathscr{S}}

\newcommand{\mb}[1]{\ensuremath{\mathbb{#1}}}
\newcommand{\N}{\mb{N}}

\newcommand{\R}{\mb{R}}
\newcommand{\C}{\mb{C}}



\renewcommand{\d}{\ensuremath{\partial}}
\newcommand{\diff}[1]{\frac{d}{d#1}}


\newfont{\bl}{msbm10 scaled \magstep2}

\newtheorem{theorem}{Theorem}[section]
\newtheorem{lemma}[theorem]{Lemma}
\newtheorem{proposition}[theorem]{Proposition}

\newtheorem{corollary}[theorem]{Corollary}

\theoremstyle{definition}
\newtheorem{remark}[theorem]{Remark}

\newcommand{\beq}{\begin{equation}}
\newcommand{\eeq}{\end{equation}}




\newcommand{\isom}{\cong}
\newcommand{\col}{\colon}

\newcommand{\FT}[1]{\widehat{#1}}
\newcommand{\F}{\ensuremath{{\mathcal F}}}

\newcommand{\dis}[2]{\langle #1 , #2 \rangle}
\newcommand{\inp}[2]{\langle #1 | #2 \rangle}  
\newcommand{\notmid}{\mid\kern-0.5em\not\kern0.5em}


\newcommand{\norm}[2]{{\left\| #1 \right\|}_{#2}}


\newcommand{\al}{\alpha}
\newcommand{\be}{\beta}
\newcommand{\ga}{\gamma}

\newcommand{\de}{\delta}
\newcommand{\eps}{\varepsilon}

\newcommand{\vphi}{\varphi}

\newcommand{\la}{\lambda}

\newcommand{\om}{\omega}
\newcommand{\Om}{\Omega}
\newcommand{\sig}{\sigma}



\newcommand{\supp}{\mathop{\mathrm{supp}}}

\renewcommand{\Im}{\ensuremath{\mathop{\mathrm{Im}}}}

\newcommand{\ovl}[1]{\overline{#1}}

\newcommand{\gC}{\widetilde{\C}}

\newcommand{\tH}{\widetilde{H^2}}
\newcommand{\tL}{\widetilde{L^2}}

\parindent=0em
\parskip=6pt

\usepackage{tikz-cd}
\usepackage{tikz}


%

\begin{document}

\pagestyle{plain}

\title{Modeling Schr\"odinger equation diffraction with generalized function potentials and initial values}

\author{G\"unther H\"ormann}

\author{Ljubica Oparnica}

\author{Christian Spreitzer}

\address{Fakult\"at f\"ur Mathematik\\
Universit\"at Wien, Austria}

\email{guenther.hoermann@univie.ac.at}

\address{Faculty of Education\\
University of Novi Sad, Serbia}

\email{ljubica.oparnica@gmail.com, ljubica.oparnica@pef.uns.ac.rs}

\address{Fakult\"at f\"ur Mathematik
\& Fakult\"at f\"ur Physik\\ Universit\"at Wien, Austria}

\email{christian.spreitzer@univie.ac.at}

\subjclass[2020]{81Q05, 35P25, 46F30} 

\keywords{Schr\"odinger equation, regularization methods, generalized functions.}

\date{\today}

\begin{abstract} 
We discuss spectral properties of a regularization approach to a Schr\"odinger equation set-up for the diffraction of a quantum particle at almost planar patterns. Physically meaningful initial values and potentials are modeled in terms of regularizing families and the solutions can be interpreted as generalized functions. We establish spectral and scattering theoretical properties of the regularizing solution families and provide some comparison with the more direct approximations and simplifications used in physics.
\end{abstract}

\maketitle

%
%
%

%
%
%
%
%



\section{Introduction}

In the introductory parts of quantum physics text books, one often finds discussions of double-slit experiments. They perfectly serve as an illustration of the fundamental so-called quantum logic and are still a prominent subject of research, not only as a \emph{Gedankenexperiment} but also in contemporary experimental physics (see, e.g., \cite{Fein:19} and \cite{Brand:21}). The methodological parts of standard text book models are then based on Schr\"odinger equations with various potentials. 
However, as a somewhat surprising observation, it seems difficult to find  discussions of the foundational double-slit experiment among the quantum mechanical text book models with such Schr\"odinger equation set-up. The desire to have such a theory in early quantum physics is nicely illustrated by the following quote from Weinberg's text book (\cite[pages 14-15]{Weinberg:15}):``There is a story that in his oral thesis examination, de Broglie was asked what other evidence might be found for a wave theory of the electron, and he suggested that perhaps diffraction phenomena might be observed in the scattering of electrons by crystals. .... What was needed was some way of extending the wave idea from free particles, described by waves ..., to particles moving in a potential, ...''. 

The standard physics argument justifying the occurrence of the interference pattern observed in a double-slit experiment is, in fact, a pure classical approximation based on point sources for waves at the narrow slits.  The two discussions of the double-slit in \cite{Manoukian:89}  and \cite{Beau:12} are based on the Feynman path technique, which means that mathematically speaking they make use of the typical distribution theoretic fundamental solution to the free Schr\"odinger equation and produce a somewhat refined justification of the usual approximation via classical wave theory and diffraction. A recent numerical and qualitative analysis based on the Bohmian point of view with particle trajectories is presented in \cite{DDS:22}, which is very similar to our approach in Subsections \ref{regpot} and \ref{regin} in its modeling of the initial value and the potential. In fact, as noted already in \cite[Remark 2.4]{Hoermann:17}, the regularization methods from theories of generalized functions would apply also in studies of the Bohmian flow related to diffraction problems based on Schr\"odinger equations.

The double slit problem is often treated as a ``boundary value'' problem with a free wave function ``approaching the double-slit''. However, the Schr\"odinger equation does not have the property of finite propagation speed that a wave equation---or, more generally speaking, any strictly hyperbolic equation has. Thus on a fundamental level it is not justified a priori why an approximation with classical wave propagation starting from the slits works so well in explaining the measured diffraction pattern on a screen at some distance from the slits. Moreover, at least in principle and a priori, one cannot rule out partial reflections from the blocking objects and that a superposition of what came in from one side of the slits and what was reflected could generate a stationary state (as described in \cite[Subsection 3.7.3]{Hall:13}) corresponding to a solution of the time-independent Schr\"odinger equation (i.e., an eigenfunction of the Hamiltonian).

A further difficulty is that modeling the potential for a realistic double-slit configuration in a Schr\"odinger operator is mathematically delicate and even distributional potentials, e.g.\ producing Dirac-$\delta$-type ``sources for passing waves'' at the slits, seem not truly appropriate due to their idealizations from the outset. Instead one might rather strive for an implementation via regularizations that can more accurately capture the nature of an essentially insurmountable high barrier away from the slits that may at the same time be of infinitesimal width, thus almost located in a plane perpendicular to the ``main propagation direction'' as seen from the source. Thus we will attempt to accurately model such potentials by corresponding generalized functions which can be conveniently represented via families of regularizations. In addition, a realistic initial value configuration will not be modeled by some $L^2$ function, but rather by a family of wave packet type regularizations defining a generalized function. We emphasize that neither for the initial value nor for the potential are these regularizations expected to be convergent as distributions. Instead they only obey certain asymptotic estimates. For this reason, our approach cannot make direct use of the elaborated and successful Hilbert space theory of Schr\"odinger operators with delta-type singularities in the potentials as described in \cite{BEKS:94,BLL:13,ER:16,BEHL:17,CFP:19,CFPCorr:19,CFP:21}. However, the method of modeling impenetrable barriers as regularizations of infinite potentials and employing concepts from the theory of Colombeau generalized functions to prove existence and uniqueness of solutions and analyze their properties has a much wider range of applicability. For instance, as an alternative to boundary conditions, generalized infinite potentials may also be used to model geometric constraints, e.g. particles confined to a half-space or to a box.

In the context of the regularization based approach of Colombeau-generalized solutions, 
well-posedness of Cauchy problems for Schr\"odinger equations allowing for discontinuous or distributional coefficients, initial data, and right-hand sides was shown in  \cite{Hoermann:11}. Previously, several Colombeau-generalized solutions to special types of linear and nonlinear Schr\"odinger equations have been constructed in \cite{Bu:96,DN:19,Sto:06b,Sto:06a}. The special case of Schr\"odinger operators with $\delta$-potential could be treated in a nonstandard analysis setting (see \cite{Albeverio:88}), but also with quadratic forms in terms of a Friedrichs extension (see \cite[Example 2.5.19]{Thirring:02}). Further applications with a mixed setting involving distribution theory, Hilbert space techniques or measures and invariant means can be found in the context of seismic wave propagation (see \cite{dHHO:08}) or generalizations of the usual quantum mechanical initial values (see \cite{Hoermann:17}).   

Recent results in \cite{RST:23} also use regularization methods for Schr\"odinger-type partial differential operators with the standard spatial Laplacian replaced by a Laplace-Beltrami-type operator that is allowed to have a singular $x$-dependent coefficient. Employing concepts based on so-called very weak solutions which are obtained as families of regularizations, the main results are well-posedness in the corresponding sense even with very singular initial data and consistency with Sobolev space solutions in case the coefficient has better regularity. In a similar set-up, \cite{ARST:21} investigates Schr\"odinger operators of fractional differentiation order in space and with highly singular potentials. 

Recall that in quantum mechanics one is often interested in allowing for the multiplication operator term $V$ in the  Schr\"odinger equation $\d_t u = i \Delta_x u - i V u$ to model nonsmooth potentials, as, e.g., already with Coulomb-type potentials. In the classical $L^2$ theory we have initial data $u |_{t = 0} = u_0$ and $|u_0|^2$ gives an initial probability density while $|u(.,t)|^2$ represents the result of the evolution of this probability density at time $t > 0$.  

In the appendix we briefly review concepts from \cite{Hoermann:11} in the regularization approach to generalized functions in the sense of Colombeau. We also describe the main results (Theorem \ref{exunthm} and Corollary \ref{cor}) on unique existence of solutions to the following Schr\"odinger equation Cauchy problem: Find a (unique) generalized function $u$ on $\R^n \times [0,T]$ solving 
\begin{align}
  \d_t u - i\, \sum_{k=1}^n   \d_{x_k} (c_k 
    \d_{x_k} u) 
    + i V u &= f \label{SCPDE}\\
    u \mid_{t=0} &= g, \label{SCIC} 
\end{align}
where $c_1, \ldots, c_n$, $V$, and $f$ are generalized functions on $\R^n \times [0,T]$ and $g$ is a generalized function on $\R^n$. 
The appendix also indicates compatibility with classical and distributional solution concepts.

\subsection*{Outline of the paper and brief summary of the main results}

Section 2 discusses the details of the mathematical regularization modeling for the potential and the initial values, while Section 3 establishes the key spectral properties of these regularizations and ensures the applicability of basic methods from scattering theory. Section 4 makes an effort to relate the regularization and generalized function set-up  with various approximations or calculational simplifications employed successfully in physics.

The various results of our analysis can be summarized under the following aspects:
\begin{trivlist}

\item\emph{Modeling with regularizations and unique generalized solutions:} Subsections \ref{regpot} and \ref{regin} give precise specifications of the potential $V$ for scattering at a planar pattern, e.g., in a double-slit setting, and of the initial value $g$ in the form of a wave packet as regularizing families $(V_\eps)_{\eps \in (0,1]}$ and $(g_\eps)_{\eps \in (0,1]}$. 
It is guaranteed then by Theorem \ref{basicthm} that with a such a set-up we have uniqueness of Coulombeau-type generalized function solutions to the Schr\"odinger equation Cauchy problem   
$$
  \d_t u =  i\, \Delta u  - i V u, \quad  u|_{t=0} = g. 
$$

\item\emph{Spectral properties:} Subsection \ref{specprop} gives a careful analysis of the spectrum of the Hamilton  operator $H_\eps = - \Delta + V_\eps$ arising in these Coulombeau-generalized Schr\"odinger operators. In particular, the absence of bound states is shown by Theorem \ref{specthm} and that the essential spectrum is preserved from the free part $-\Delta$. Interpreting and extending then in Subsection \ref{geneig} the notion of eigenvalues to the context of Colombeau-type generalized functions, we determine in Proposition \ref{gspecprop} the generalized point spectrum and obtain that it contains the essential spectrum of $H_\eps$ as a subset.

\item\emph{Scattering theoretical representations:} The modeling options for the potential regularizations are so flexible that they allow also to have compact support of $V_\eps$ at each instance of $\eps > 0$ and letting the size of the supports grow as $\eps \to 0$. An advantage of such a set-up is that methods from the theory of short range perturbations become applicable, which we justify in Subsection \ref{scattsubs} and obtain a concrete distributional representation for the so-called distorted Fourier transform of the generalized solution in Equation \eqref{disform}.

\item\emph{Comparison with standard approximations in physics:} A common simplification of the model is to drop the potential from the operator and instead move information about some anticipated behavior in the scattering process or an intuitive qualitative property of the solution into the initial value or a new source term for the Cauchy problem. We study such strategies on a very basic level, but arrive already at the end of Subsection \ref{thphresults} at an intuitively appealing agreement with physics in terms of a wave intensity distribution on a parallel screen based on the Fourier transform of the characteristic function of the slit configuration. In Subsections \ref{apc} and \ref{iapc} we investigate the plausibility of approaches in physics that consider first the propagation of a free solution with initial values approximating a Dirac delta towards the scattering plane and then use essentially the interaction $V_\eps \cdot u_\eps$ of the potential function with this free solution at the scattering plane as a source term $F_\eps$ for a new Cauchy problem. The analysis is inherently vague at a few points, but we do obtain a certain coherence for the crucial Equation \eqref{wApprox} giving an approximate solution by two different methods. Moreover, we find in Proposition \ref{betaprop} an improved and clear-cut mathematical support for the precise choice of the source term $F_\eps$.

\end{trivlist}


\section{Modeling of the Cauchy problem and of the Hamiltonian}

\subsection{Regularizations representing the potential}\label{regpot} \hphantom{B}

\noindent
\parbox[c]{0.7\textwidth}{
We consider a model potential of the form $V(x,y,z) = b(x) h(y)$ for diffraction at a pattern concentrated near the plane $x = 0$, which is invariant with respect to height $z$ and has a horizontal structure described by $h$ as a generalized function of $y$. The value $h(y)$ should be ``essentially zero'' where slits are located, say if $y$ belongs to a subset $S \subseteq \R$ being the disjoint union of finitely many intervals, and $h(y)$ should be ``essentially infinite'' at locations that block classical objects, i.e., if $y \in \R \setminus S$. {\small (Typically, the intervals in $S$ will be bounded in case of modeling slits, but we need not require this here formally.)}
Since the whole problem is invariant with respect to $z$-translations, we reduce it immediately to a problem in the $(x,y)$-plane. 
}
\hspace{0.02\textwidth}
\parbox{0.25\textwidth}{
\begin{tikzpicture}[scale=0.65]
   \fill[cyan, opacity=0.2] (-0.5,-3) -- (-0.5,-1.5) -- (0.5,-1.5) -- (0.5,-3) -- cycle;
   \fill[cyan, opacity=0.2] (-0.5,-1) -- (-0.5,-.5) -- (0.5,-0.5) -- (0.5,-1) -- cycle;
   \fill[cyan, opacity=0.2] (-0.5,0.5) -- (-0.5,1) -- (0.5,1) -- (0.5,0.5) -- cycle;
   \fill[cyan, opacity=0.2] (-0.5,1.5) -- (-0.5,3) -- (0.5,3) -- (0.5,1.5) -- cycle;
   \draw[->] (-2,0) -- (2,0) node[right] {$x$};
   \draw[->] (0,-3) -- (0,3) node[above] {$y$};
    \draw[very thick, magenta] (0,-1.5) to (0,-1);
     \draw[very thick, magenta] (0,-0.5) to (0,0.5);
     \draw[very thick, magenta] (0,1) to (0,1.5);
     \node[magenta] at (1,1.2) {$S$};
     \node[cyan] at (1.5,-2) {$\supp V$};
\end{tikzpicture}
}

\medskip

As a representation of the potential we consider a  family of regularizations given by
\beq\label{potentialreg}
V_\eps(x,y) := b_\eps(x) h_\eps(y)   \quad (\eps > 0, (x,y) \in \R^2),
\eeq
where $b_\eps, h_\eps \in L^\infty(\R)$ are nonegative for every $\eps \in\, ]0,1]$, for some $c > 0$ we have $\supp b_\eps \subseteq [-c,c]$ for every $\eps$, and   
\beq\label{hlimits}
   \text{$h_\eps \to 0$ pointwise on  $S$, but $h_\eps \to \infty$ on $\R \setminus S$}. 
\eeq
In addition, we assume that the families $(b_\eps)$ and $(h_\eps)$ are $W^{\infty,\infty}$-moderate (cf.\ the appendix for this notion). We may then consider $V$ as the element in $\G_{\infty}(\R^2)$ represented by the family $(V_\eps)$ and apply Theorem \ref{exunthm}, given in the appendix, for arbitrary $T > 0$ to obtain the statement of unique solvability in the following theorem. 
 (Note that the conditions (a) and (b) in Theorem \ref{exunthm} are automatically satisfied, since we have $c_k = 1$ and a potential $V$ without $t$-dependence.)

\begin{theorem}\label{basicthm}
Let $V \in \G_{\infty}(\R^2)$ denote the potential defined via the regularizations \eqref{potentialreg}. For every $g \in \G_{2}(\R^2)$, there is a unique solution $u \in \G_{2}(\R^2 \times [0,T])$ to the Cauchy problem
\begin{equation}\label{ourproblem}
  \d_t u =  i\, \Delta u  - i V u, \quad  u|_{t=0} = g. 
\end{equation}
\end{theorem}

\begin{remark}(Relation with the indicator function of the slit configuration)\label{remindicator} The standard approximations in theoretical physics as described in Subsection \ref{thphresults} circumvent the introduction of an explicit potential for the Schr\"odinger operator. Instead they encode the slit configuration, corresponding to the subset $S \subseteq \R$ above, into a specific initial condition, often called a ``boundary condition'', involving the characteristic function $1_S$ of $S$, i.e., $1_S(y) = 1$ for $y \in S$ and $0$ otherwise. Thus, one would think of  $V=0$ and $g (x,y) = \de_0(x) \cdot 1_S(y)$ in Equation \eqref{ourproblem}. With a view at Equation \eqref{potentialreg}, a heuristics for the precise form of this initial condition includes the assumption $b_\eps \to \de_0$ as $\eps \to 0$, which is compatible with the specifications, and it remains to find some argument for replacing $h_\eps$ by $1_S$.

We may approximately justify this in our set-up: By construction, $\norm{h_\eps}{L^\infty} \to \infty$ as $\eps \to 0$, in particular,  $\norm{h_\eps}{L^\infty} > 0$ for small $\eps > 0$, and we clearly have $h_\eps(y)/\norm{h_\eps}{L^\infty} \to 0$ for any $y \in S$;  to guarantee also $h_\eps(y)/\norm{h_\eps}{L^\infty} \to 1$, if $y \in \R \setminus S$, so that we obtain the pointwise convergence
\beq\label{convind}
   1 - \frac{h_\eps(y)}{\norm{h_\eps}{L^\infty}} \to 1_S(y) \quad \eps \to 0,
\eeq
we could simply make the aspect $h_\eps \to \infty$ on $\R \setminus S$ of requirement \eqref{hlimits} more concrete, namely,   
\beq\label{uninfty}
   \norm{h_\eps}{L^\infty} \stackrel{\eps\to 0}{\longrightarrow} \infty  \quad\mathrm{and}\quad\forall y \in \R \setminus S\, \exists c > 0\, \exists \eps_0 > 0\col   h_\eps(y) \geq \norm{h_\eps}{L^\infty} - c > 0 \quad (0 < \eps \leq \eps_0).
\eeq
This condition would even be compatible with choosing $h_\eps$ of compact support (which is certainly growing as $\eps$ becomes smaller) and implies \eqref{convind}.
\end{remark}

In the description of a scattering experiment, one is interested in the limiting behavior as $\eps \to 0$ of solution representatives $(u_\eps)$ in case of an initial value $g$, modeled by a regularizing family $(g_\eps)$, corresponding to a quantum particle that is ``spread out'' considerably in the  $y$-direction  and  ``approaches'' the plane $x= 0$ from the region $x < 0$. 
In particular, the aim would be to study the properties of the \emph{intensity distribution} 
\beq\label{intensity}
    y \mapsto |u_\eps(t_1, x_1,y)|^2
\eeq
at some time $t _1> 0$ and as $\eps \to 0$, where $x_1 > 0$ represents the location of some screen parallel to the diffraction plane.

 For the purpose of asymptotics with $\eps \to 0$ one does not rely on the full framework of generalized functions and Theorem \ref{exunthm}, but may instead look more directly at the family of regularized problems
\begin{equation}\label{ourprobleps}
  \d_t u_\eps = i \, \Delta u_\eps  - i V_\eps u_\eps, \quad  u_\eps|_{t=0} = g_\eps. 
\end{equation}
For every $\eps > 0$, $V_\eps$ is bounded and real-valued, thus the \emph{Hamiltonians} 
\beq\label{Hamiltonian}
H_\eps := -\Delta + V_\eps 
\eeq
are an $\eps$-parametrized family of self-adjoint operators with domain $H^2(\R^2)$ (independent of $\eps$). We obtain the solution $u_\eps$ via the  unitary group generated by $H_\eps$, i.e.,
$$
  \forall \eps \in (0,1], \forall t \in \R \col \quad u_\eps(t) = e^{- i t H_\eps} g_\eps.
$$

\subsection{Regularizations representing the initial configuration}\label{regin}

Intuitively, during the time while the ``source quantum particle approaches the scattering obstacle'', the idealization of ``least possible localization in $y$''  would be a plane wave of the form $(x,y,t) \mapsto Ae ^{i (k x - \omega t)}$ ($A\in \mathbb C$), following de Broglie's correspondence with a free particle of momentum $\hbar (k,0) \in \R^2$ and energy $E = \hbar \om \in  (0,\infty)$, where $\omega =\frac{\hbar k^2}{2m}$ for a particle of mass $m$. However, it is a more realistic model to implement instead the initial value $g$, or rather its regularizations $g_\eps$, as a ``wave packet'' with an average momentum $p_0 = \hbar k$ in the $x$-direction. Dropping again the explicit reference to $\hbar$, we may make the ansatz for the initial value representatives as the family of functions
\beq\label{invalmod}
     g_\eps (x,y) =   \rho_{\eps}(y)  \vphi_\eps(x) e^{i p_0 x},
\eeq
where $(\rho_\eps)_{\eps \in (0,1]}$ and  $(\vphi_\eps)_{\eps \in (0,1]}$ are $H^\infty$-moderate families of nonnegative functions on $\R$ such that $\supp(\vphi_\eps) \subseteq (-\infty,-c-1]$, and $\norm{\rho_\eps}{L^2} = 1 = \norm{\vphi_\eps}{L^2}$ for every $\eps > 0$. Note that $H^\infty(\R) \subseteq H^1(\R) \subseteq \{ \psi \in C(\R) \mid \lim_{|z| \to \infty} |f(z)| = 0\}$, hence $\int_\R \rho_\eps(y) \rho'_\eps(y) \, dy = (\rho_\eps(y)^2/2) |^\infty_{-\infty} = 0$ and similarly for $\vphi_\eps$ and we obtain, 
by construction, that $g_\eps$ has the following momentum expectation values 
$$
      \inp{g_\eps}{-i \d_y g_\eps}  = 0  \quad\text{and}\quad  \inp{g_\eps}{-i \d_x g_\eps} = p_0.
$$
Moreover, $\norm{g_\eps}{L^2}=1$ and $\supp(g_\eps)\cap \supp(V_\eps)=\emptyset$ for all $\eps \in (0,1]$  (recall that $\supp(V_\eps)\subseteq [-c,c]\times \mathbb R$). Thus, the family of wave functions $(g_\eps)$   models a single quantum particle initially  located in a region that does not intersect with the set where the potential is nonzero. The particle moves with  velocity $\tfrac{p_0}{m}$ towards the double-slit and as the wave packet propagates, it spreads according to the dispersion relation $\omega(k)=\tfrac{\hbar k^2}{2m}$. For example, if $g_\eps$ is a Gaussian, then the corresponding solution to the free equation will also be a Gaussian, but its width will increase  with time $t$. Note that if $\varphi_\eps$ has compact support (which would appear as a rather natural assumption for a particle emanating from some source), its Fourier transform cannot be compactly supported  by the Paley-Wiener-Schwartz theorem. Hence the wave packet will contain arbitrarily large frequencies associated to its direction of propagation. This implies that for all $t>0$, $\supp(u_\eps(t))\cap \supp(h_\eps)$ will be nonempty.  

\section{Analysis of the Hamiltonian}

We study now for arbitrary $\eps > 0$ the self-adjoint operator $H_\eps = - \Delta + V_\eps$ with domain $H^2(\R^2)$ according to \eqref{Hamiltonian}, where $V_\eps (x,y) = b_\eps(x) h_\eps(y)$ as in \eqref{potentialreg} and \eqref{hlimits}, in particular, $V_\eps$ is a nonnegative bounded function.

\subsection{Spectral properties}\label{specprop}

For every $\eps > 0$, the operator $H_\eps$ is  positive, since $\inp{f}{H_\eps f} = \inp{\d_x f}{\d_x f}  + \inp{\d_y f}{\d_y f} + \inp{f}{V_\eps f} \geq 0$ holds for every $f \in H^2(\R^2)$, and we immediately obtain the following inclusion relation for the spectrum: 
$$
    \sig(H_\eps) \subseteq [0,\infty).
$$
We will show below in Theorem \ref{specthm}, under the additional condition \eqref{bcond} on $b_\eps$ and \eqref{hlimits'}  on $h_\eps$, that the spectrum of $H_\eps$ is, in fact, equal to $[0,\infty)$. Moreover, we will see that $H_\eps$ has no eigenvalues, i.e., the \emph{point spectrum} $\sig_p(H_\eps)$ is empty. Recall that the \emph{discrete spectrum} $\sig_d(T)$ of an operator $T$ consists of the isolated spectral points that are eigenvalues of finite multiplicity and the essential spectrum $\sig_{\text{ess}}(T)$ is its complement within the spectrum, thus we have the disjoint union $\sig(T) = \sig_d(T) \cup \sig_{\text{ess}}(T)$ and $\sig_d(T) \subseteq \sig_p(T)$. The \emph{essential spectrum} may include eigenvalues which are nonisolated or have infinite dimensional eigenspace, hence in general it may happen that $\sig_p(T) \cap \sig_{\text{ess}}(T) \neq \emptyset$.

 Before proceeding let us briefly recall a few more operator theoretic notions: 
 A densely defined operator $B$ is said to be \emph{relatively $A$-bounded} with respect to the densely defined operator $A$, if the domain of $B$ contains that of $A$ and there are constants $c_1, c_2 \geq 0$ such that $\| B \vphi\| \leq c_1 \|A\vphi\| + c_2 \|\vphi\|$ holds for all $\vphi$ in the domain of $A$ (cf.\ \cite[Section 13.1]{HS:96}); the infimum over $c_1$ is caled the \emph{relative bound} of $B$ with repsect to $A$; in case both operators $A$ and $B$ are closed, it suffices to require only the domain property (cf.\ \cite[Section X.2]{RS:V2}). 
 Let $A$ be a self-adjoint operator, then an operator $B$ is called \emph{relatively $A$-compact}, if the domain of $B$ contains that of $A$ and $B (A + i)^{-1}$ is a compact operator (cf.\ \cite[Section 14.1]{HS:96} or \cite[Section XIII.4, page 113]{RS:V4}); in the latter case, $B$ is also relatively $A$-bounded. We will apply these notions to $A = - \Delta$ and $B$ the multiplication operator corresponding to the potential $V_\eps$.

\begin{remark}\label{RemCompSupp} (The compactly supported case) If $V_\eps$ has compact support, which is exactly the case when $h_\eps$ has compact support, then the derivation of the spectral properties is straightforward: The corresponding multiplication operator is relatively ($-\Delta$)-compact and hence $H_\eps$ has the same essential spectrum as $-\Delta$ (cf.\ \cite[Section 14.3]{HS:96} or \cite[Theorem XIII.15]{RS:V4}). Therefore, $\sig_{\text{ess}}(H_\eps) = \sig(- \Delta) = [0, \infty)$, and since we already knew $\sig(H_\eps) \subseteq [0,\infty)$, we must have that the discrete spectrum $\sig_d(H_\eps)$ is empty and therefore $\sig(H_\eps) = \sig_{\text{ess}}(H_\eps) = [0, \infty)$.

This leaves us with the question whether there can be eigenvalues embedded in $[0,\infty)$: Since the (virial) function $(x,y) \mapsto x \d_x V_\eps(x,y) + y \d_y V_\eps(x,y)$ also has compact support, we obtain from \cite[Theorem 16.1]{HS:96} that $H_\eps$ cannot possess any strictly positive eigenvalues (alternatively, this follows also from  \cite[Corollary to Theorem XIII.58]{RS:V4}). Finally, we see that $0$ cannot be an eigenvalue either: Otherwise, there would be a nonvanishing $f \in H^2(\R^2)$ with $- \Delta f + V_\eps f = 0$ on $\R^2$; upon multiplication by $\overline{f}$, integration over $\R^2$, an integration by parts implies $0 = \inp{\d_x f}{\d_x f}  + \inp{\d_y f}{\d_y f}  + \inp{V_\eps f}{f} \geq  \inp{\d_x f}{\d_x f}  + \inp{\d_y f}{\d_y f}$ due to the nonnegativity of $V_\eps$; hence $f$ would have to be a constant function.

\end{remark}

Note that since $V_\eps$ need not have compact support and does not necessarily belong to any $L^p$ with $1 \leq p < \infty$, we cannot apply the same reasoning as in the above remark assessing spectral properties for the Hamiltonian $H_\eps = - \Delta + V_\eps$. In particular, $V_\eps$ need not be a relatively ($-\Delta$)-compact perturbation { (e.g., the image of a bounded sequence of $L^2$-orthonormal test functions with supports contained in the interior of $\supp(V_\eps)$ is not relatively compact in $L^2$)}. Nevertheless, we can prove that the spectral properties  are still analogous to those in the compactly supported case under the following technical assumptions, the first being a
slightly more specific version of requirement \eqref{hlimits}: There is a net $(a_\eps)$ of real numbers $a_\eps > 0$ with $a_\eps \to \infty$ ($\eps \to 0$) and 
\beq\tag{\ref{hlimits}$'$}\label{hlimits'}
     \text{$h_\eps \to 0$ pointwise on  $S$, and $h_\eps = a_\eps$ on $\R \setminus S$}.
\eeq 
Hence every $h_\eps$ is constant and equal to $a_\eps$ outside the slits $S$, and this constant ``barrier" tends to $\infty$.
The second condition is on $b_\eps$ and about symmetry and monotonicity:
\beq\label{bcond}
   b_\eps (-x) = b_\eps(x) \quad (x \in \R) \quad\text{and}\quad b_\eps'(x) \leq 0 \quad (x \geq 0). 
\eeq
In particular, \eqref{bcond} implies
\beq\label{bcond2}
   x b_\eps'(x) \leq 0 \quad (x \in \R). 
\eeq

\begin{theorem}\label{specthm} Let $H_\eps=-\Delta+ V_\eps$ with potential $V_\eps$ according to \eqref{potentialreg}, \eqref{hlimits'}, and \eqref{bcond}. Then we have $\sig(H_\eps) = [0,\infty)$, while the point spectrum $\sig_p(H_\eps)$ is empty, i.e.\ $H_\eps$ has no eigenvalues.
\end{theorem}

\begin{proof}  \emph{Step 1, determining the spectrum:} For arbitrary fixed $\eps > 0$ we temporarily write $a:=a_\eps$ and 
$$
    V_\eps(x,y) = \chi_1(x) (a - \chi_2(y))\quad\text{with}\quad \chi_1 := b_\eps, \chi_2(y) :=  a - h_\eps(y). 
$$    
Thus $\chi_1$ and $\chi_2$ have compact supports and we obtain
$$
    H_\eps = - \Delta + \chi_1 \otimes (a - \chi_2) = \underbrace{- \d_x^2 + a \chi_1 \otimes 1 - \d_y^2}_{A}  {- 
       \underbrace{\chi_1 \otimes \chi_2}_{B}},
$$
where $B$ is a compact perturbation of the self-adjoint operator $A$ {\small (which itself is a bounded perturbation of $- \Delta$; \cite[Theorem X.12]{RS:V2})} and therefore does not change the essential spectrum (by the classical form of Weyl's theorem \cite[Example 3 in Section XIII.4]{RS:V4}), i.e., 
$$
    \sig_{\text{ess}}(H_\eps) = \sig_{\text{ess}}(A).
$$
In addition, we observe that the operator $A$ has the form
$$
    A = S_1 \otimes I + I \otimes S_2 \quad\text{with}\quad S_1 := -\d_x^2 + a \chi_1, \; {S_2 := -\d_y^2}.
$$
By \cite[Theorem VIII.33]{RS:V1} we therefore have the following relation between the spectra
$$
   \sig(A) = \ovl{\sig(S_1) + \sig(S_2)}.
$$
Each $S_j$ ($j=1,2$) is self-adjoint and positive with domain $H^2(\R) \subseteq L^2(\R)$, thus $\sig(S_j) \subseteq [0,\infty)$. {Clearly, $\sig_{\text{ess}}(S_2) = \sig_{\text{ess}}(-\d_y^2) = [0,\infty)$.} As above, {$\chi_1$} being a compact perturbation of the one-dimensional Laplacian {$- \d_x^2$} also implies {$\sig_{\text{ess}}(S_1) = \sig_{\text{ess}}(- \d_x^2) = [0,\infty)$}. Hence we have
$$
    \sig(S_j) = \sig_{\text{ess}}(S_j) = [0,\infty) \quad (j=1,2),
$$
which then implies that
$$
    \sig(A) = [0,\infty).
$$
Since there are no isolated points in the spectrum of $A$, we also obtain
$$
      \sig_{\text{ess}}(H_\eps) = \sig_{\text{ess}}(A) = \sig(A) = [0,\infty)
$$
and positivity of $H_\eps$ yields $\sig(H_\eps) \subseteq [0,\infty)$, hence
$$
      \sig(H_\eps) = \sig_{\text{ess}}(H_\eps) = [0,\infty).
$$

\medskip

\noindent\emph{Step 2, $0$ cannot be an eigenvalue of $H_\eps$:} This follows by the same reasoning as in Remark \ref{RemCompSupp}. 

\medskip

\noindent\emph{Step 3, $H_\eps$ has no positive eigenvalues:} We will show this by suitable adaptations of the statements and proofs in \cite[Corollary to and Theorem XIII.58]{RS:V4} and noting that \cite[Theorem XIII.57]{RS:V4} is applicable to the bounded potential $V_\eps$, because the operator of multiplication by $V_\eps$ clearly is relatively $(-\Delta)$-bounded with relative bound less than $1$. 

\smallskip

\noindent\emph{Substep 3(a), preparatory remarks:} Applying the \emph{radial vector field} $r \d_r := x \d_x + y \d_y$ to $V_\eps$ yields
$$
   r \d_r V_\eps(x,y) = x b_\eps'(x) h_\eps(y) + y b_\eps(x) h_\eps'(x) 
$$
and defining $\mu_1(x,y) := - x b_\eps'(x) h_\eps(y)$, $\mu_2(x,y) := y b_\eps(x) h_\eps'(x)$ we may write
\beq\label{rdrV}
   r \d_r V_\eps = - \mu_1 + \mu_2,
\eeq
where $\mu_1 \geq 0$ (thanks to \eqref{bcond2}), $\supp \mu_1 \subseteq [-c,c] \times \R$, and $\mu_1(x,y) = - a_\eps x b_\eps'(x)$ for $y \notin S$, while $\mu_2$ has compact support contained in $[-c,c] \times S$ (thanks to \eqref{hlimits'}).

Noting that $H^2(\R^2) = \F^{-1} \{ f \in L^2(\R^2) \mid z \mapsto (1 + |z|^2) f(z) \in L^2(\R^2)\} \subseteq \F^{-1} L^1(\R^2)$ {\small (by the Cauchy-Schwarz inequality; this is also a special case of \cite[Lemma 7.9.2 and Theorem 7.9.3]{Hoermander:V1})}, we deduce that for any $\psi \in H^2(\R^2)$, the nonnegative function $|\psi|^2$ on $\R^2$ is continuous, integrable, and $\lim_{|z| \to \infty} |\psi(z)|^2 = 0$. In particular, we may apply the distribution of order $0$ given in \eqref{rdrV} to it and obtain a distributional interpretation of the $L^2$ inner product $\inp{\psi}{r \d_r V_\eps \psi}$ in terms of $\dis{\mu_2}{|\psi|^2} - \dis{\mu_1}{|\psi|^2}$. Moreover, upon introducing the scaling $V_\eps^a(z) := V_\eps(a z)$ ($z \in \R^2$, $a > 0$) and noting that $\lim_{a \to 1} \inp{\psi}{\frac{V_\eps^a - V_\eps}{a -1} \psi} = - \dis{\mu_1}{|\psi|^2} + \dis{\mu_2}{|\psi|^2}$, we may even state the following variation of the \emph{virial theorem} (cf.\ \cite[Theorem XIII.59, Equation (96)]{RS:V4}), which justifies the second equality in Equation \eqref{virial}: If $\psi \in H^2(\R^2)$ is a solution to the eigenvalue equation $- \Delta \psi + V_\eps \psi = \la \psi$, then
\beq\label{virial}
  2 \inp{\psi}{-\Delta \psi} = 2 \inp{\psi}{(\la - V_\eps) \psi} = 
  \lim_{a \to 1} \inp{\psi}{\frac{V_\eps^a - V_\eps}{a -1} \psi}
  = \left( \dis{\mu_2}{|\psi|^2}  - \dis{\mu_1}{|\psi|^2} \right) 
  =: \inp{\psi}{r \d_r V_\eps \psi}.
\eeq
Recalling now from \eqref{rdrV} that outside the compact support of $\mu_2$, we have $r \d_r V_\eps = -\mu_1$, one might hope also for an appropriate extension of the classical assertion in \cite[Corollary to Theorem XIII.58]{RS:V4}, saying that for a bounded and differentiable potential $V$, the condition of repulsiveness ($\d_r V \leq 0$) near infinity implies nonexistence of positive eigenvalues. We argue for this to be true in the remaining part of this proof by sketching the few changes required in the proof of \cite[Theorem XIII.58]{RS:V4}, where compared to the statement in \cite[Theorem XIII.58]{RS:V4} we consider $V_1 = 0$ and $V_2 = V_\eps$, having to replace the conditions (ii) and (iii) in the hypothesis of that theorem by appropriate alternative properties of $V_\eps$ along the way.

\smallskip

\noindent\emph{Substep 3(b):} Following the strategy in the proof of \cite[Theorem XIII.58]{RS:V4}, entering there at the third paragraph on page 227 (the one including Equation (89)), let us suppose that $\la > 0$ and $\psi \in H^2(\R^2)$ satisfy 
$$ 
 -\Delta \psi + V_\eps \psi = \la \psi,
$$
i.e., $\psi$ is an eigenfunction for the positive eigenvalue $\la > 0$, and define 
$$
    w(r,\om) := \sqrt{r} \, \psi(r \cos \om, r \sin \om) \quad (r > 0, \om \in [0,2\pi[).
$$
We may consider $w$ as a function $(0,\infty) \to L^2(S^1)$, where we will denote the inner product in the latter space by $(. \mid .)$. With the notation 
$$
   \tilde{f}(r,\om) := f(r \cos \om, r \sin \om)
$$ 
we may write $w(r) = \sqrt{r} \tilde{\psi}(r,.)$. Recalling $\widetilde{(\Delta \psi)} = \d_r^2 \tilde{\psi} + \frac{1}{r} \d_r \tilde{\psi} + \frac{1}{r^2} \d_\om^2 \tilde{\psi}$, an elementary calculation shows that the eigenvalue equation implies (denoting $w' = \d_r w$)
$$
    w''(r) + \frac{1}{r^2} \d_\om^2 w(r) - \tilde{V_\eps}(r) w(r) = - \frac{1}{4 r^2} w(r) - \la w(r).
$$

\smallskip

\noindent\emph{Substep 3(c):} In analogy to the role of Equations (90) and (91) on page 227 in \cite{RS:V4}, one can show from the above relation (and the positivity of the self-adjoint operator $- \d_\om^2$ in $L^2(S^1)$) that the function 
$$
   F(r) := (w'(r) \mid w'(r) ) + \frac{1}{r^2} (w(r) \mid \d_\om^2 w(r) ) + ( w(r) \mid (\la - \tilde{V_\eps}(r)) w(r))
$$
satisfies for sufficiently large $r$ the inequality
\begin{multline}\label{Flowerbound}
   \diff{r} \left( r F(r) \right) \geq (1 - \frac{1}{4 r}) \norm{w'(r)}{L^2(S^1)}^2 + (\la - \frac{1}{4r}) \norm{w(r)}{L^2(S^1)}^2\\ 
      - ( w(r) \mid \tilde{V_\eps}(r) w(r)) - ( w(r) \mid r \tilde{V_\eps}'(r) w(r)), 
\end{multline}
where the final term should be interpreted in the sense of an $r$-parametrized distribution acting on $S^1$, namely as $( w(r) \mid r \tilde{V_\eps}'(r) w(r)) := \dis{\widetilde{r \d_r V_\eps}(r)}{|w(r)|^2} =  - \dis{\tilde{\mu_1}(r)}{|w(r)|^2}$, if $r$ is also larger than the diameter of the support of $\mu_2$; in fact, upon pull-back by polar coordinates we may write $\dis{\tilde{\mu_1}}{\tilde{f}} = \int_{\sqrt{\eps^2 + d^2}}^\infty \dis{\tilde{\mu_1}(r)}{\tilde{f}(r,.)} \, dr$ with $\tilde{\mu_1}(r) \in \D'(S^1)$ being a regular distribution, i.e., acting by integration with a locally integrable function. Having settled for its meaning, we see immediatly that the fourth term on the right-hand side of inequality \eqref{Flowerbound} gives the nonnegative contribution $- ( w(r) \mid r \tilde{V_\eps}'(r) w(r)) = \dis{\tilde{\mu_1}(r)}{|w(r)|^2}$ to the lower bound. Since the scalar factors of the first and second term are clearly positive for sufficiently large $r > 0$, it remains to investigate the third term $- ( w(r) \mid \tilde{V_\eps}(r) w(r)) = - \int_{S^1} r |\tilde{\psi}(r,\om)|^2 \tilde{V_\eps}(r,\om) \, d\om$, which is clearly nonpositive. However, since for large $r > 0$, $\supp(\tilde{V_\eps}(r,.))$ is contained in two small angular intervals of size proportional to $1/r$ and $\lim_{|z| \to \infty} |\psi(z)|^2 = 0$, as noted in the discussion following \eqref{rdrV} above, we conclude that $\lim_{r \to \infty} ( w(r) \mid \tilde{V_\eps}(r) w(r)) = 0$. Therefore, \eqref{Flowerbound} implies that one can find some $R > 0$ guaranteeing
$$
     \diff{r} \left( r F(r) \right) \geq \quad (r \geq R)
$$
and we may directly deduce 
$$
  r F(r) \geq R F(R) \quad (r \geq R).
$$

\smallskip

\noindent\emph{Substep 3(d):} Now from the above, the reasoning at the bottom of page 228 in \cite{RS:V4} applies to yield that $F(r) \leq 0$ for all $r \geq R$ and that it suffices to show $w(r) = 0$ for large $r$ to finish the proof. The arguments on pages 229-230 in \cite{RS:V4} for the remaining part of the proof need 
adaptation only at one point, namely on lines 11-14 near the middle of page 229, because we do not have an analogue of the inequality $- (r^2 V_2)' \geq 0$. Instead, in estimating $\diff{r}(r^2 G(m,r))$ for the analogue of the function $G(m,r)$ introduced in \cite{RS:V4} on page 229 (with the notation $w_m = r^m w$, $m \geq 0$, as in \cite{RS:V4} and $R$, $\sqrt{\la}$ in place of $R_1$, $k$ used there, respectively), namely, 
\begin{multline*}
    G(m,r) := \norm{w_m'(r)}{L^2(S^1)}^2 +(\la - \frac{\la R}{r} + \frac{m (m+1)}{r^2})  \norm{w_m(r)}{L^2(S^1)}^2\\
    + \frac{1}{r^2} (w_m(r) \mid \d_\om^2 w_m(r) - (w_m(r) \mid \tilde{V_\eps}(r) w_m(r),
\end{multline*}
we inspect a combination of two specific terms occurring upon differentiating $r^2 G(m,r)$:
\begin{multline*}
  2 r\la (1 - \frac{R}{2 r})  \norm{w_m(r)}{L^2(S^1)}^2
   - (w_m(r) \mid (r^2 \tilde{V_\eps}(r))' w_m(r))\\ 
     = r^{2m+1} 2 \la (1 - \frac{R}{2 r})  \norm{w(r)}{L^2(S^1)}^2
     - r^{2m+1} \Big( 2 ( w(r) \mid \tilde{V_\eps}(r) w(r)) + (w(r) \mid r \tilde{V_\eps}'(r)w(r)) \Big)\\
      = r^{2m +1} \Big(  2 \la (1 - \frac{R}{2 r})  \norm{w(r)}{L^2(S^1)}^2 - 2 ( w(r) \mid \tilde{V_\eps}(r) w(r)) - (w(r) \mid r \tilde{V_\eps}'(r)w(r))  \Big) 
      =: r^{2m+1} h(r)
\end{multline*}
and recognize that by the reasoning detailed above we have $h(r) \geq 0$ for sufficiently large $r > 0$. 
\end{proof}

In summary, $H_\eps$ has no eigenvalues and its continuous spectrum equals the essential spectrum given by $[0,\infty)$.
In particular, the dynamics represented by $\exp(- i t H_\eps)$ at fixed $\eps$ does not produce any bound states that would correspond to eigenvectors associated with eigenvalues from the discrete spectrum (cf.\ \cite[Section XIII.2]{RS:V4}).

\begin{remark}
The symbol of the second-order differential operator $H_\eps$ is the function on $\R^2 \times \R^2 \isom T^*(\R^2)$ given by $H_\eps(q,\theta) := |\theta|^2 + V_\eps(q) \geq |\theta|^2$, hence it is obviously uniformly elliptic. In addition, we can show  that $H_\eps$ is an \emph{operator of constant strength} on $\R^2$ in the sense of \cite[Definition 13.1.1]{Hoermander:V2}: In fact, with the weight function as defined in \cite[Example 10.1.3]{Hoermander:V2},
$$
     \widetilde{H_\eps}(q,\theta)^2 := \sum_{|\al| \leq 2} |\d_\theta^\al H_\eps(q,\theta)|^2 = \left| |\theta|^2 + |V_\eps(q)|\right|^2 + 4 |\theta|^2 + 8, 
$$ 
we obtain for arbitrary $q, r, \theta \in \R^2$ the estimate
$$
   \left( \frac{\widetilde{H_\eps}(q,\theta)}{\widetilde{H_\eps}(r,\theta)} \right)^2 \leq 
   \frac{|\theta|^4 + 4 |\theta|^2 + 8 + (2 |\theta|^2 + \norm{V_\eps}{L^\infty}) \norm{V_\eps}{L^\infty}}{|\theta|^4 + 4 |\theta|^2 + 8} \leq
   1 +  (1 +  \norm{V_\eps}{L^\infty}) \norm{V_\eps}{L^\infty}.
$$
\end{remark}

\subsection{Generalized eigenvalues}\label{geneig}

On the level of generalized functions, the family of operators $(H_\eps)_{\eps \in (0,1]}$ defines a linear map, assigning to the class represented by $(\psi_\eps)_{\eps \in (0,1]}$ the class of $(H_\eps \psi_\eps)_{\eps \in (0,1]}$. In striving for a domain as large as possible and noting that we have 
$$
   \tH := \G_{H^2(\R^2)} \subseteq \G_{L^2(\R^2)} =: \tL
$$ 
as modules over the ring of generalized complex numbers $\gC$, we thus obtain the $\gC$-linear map 
$$
   H \col \tH \to \tL, \quad \text{class of } (\psi_\eps) \mapsto \text{ class of } (H_\eps \psi_\eps).
$$

We may then study the question of eigenvalues $\la \in \gC$ of $H$ in the sense whether $H \psi = \la \psi$ can hold with some $0 \neq \psi \in \G_{H^2}$. 
From the investigation of eigenvalue problems in \cite[Section 2 and Subsection 3.2]{HKK:14}, we learn that we should 
require of a prospective eigenfunction $\psi$ to be a \emph{free vector} and not just nonzero, in order to avoid some artefacts due to the existence of zero divisors in $\gC$. Recall that a vector $\psi \in \tL$ is free, if $\psi$ is linearly independent over the ring $\gC$, i.e., $\la \psi = 0$ in $\tL$  with $\la \in \gC$ always implies $\la = 0$. A free vector certainly is nonzero, but there are nonzero vectors in $\tL$ that are not free. 

We thus define $\la \in \gC$ to be a \emph{generalized eigenvalue} of $H$, if there exists a free vector $\psi \in \tH \subseteq \tL$ such that
$$
    H \psi = \la \psi
$$
holds. In this case we call $\psi$ a \emph{generalized eigenfunction} or \emph{eigenvector} of $H$. The \emph{generalized point spectrum} $\sig_{gp}(H)$ shall be be the subset of $\gC$ consisting of the generalized eigenvalues of $H$.

Fortunately, in case of the $\gC$-module $\tL$ it is not difficult to characterize the free vectors very much along the lines of  \cite[Lemma 2.5]{HKK:14} (originating from \cite[Theorem 5.8]{Mayrhofer:08}). 
\begin{lemma}\label{frelem}
  A vector $\psi \in \tL$ is free, if and only if the generalized number $\| \psi\|_{L^2}$, represented by $(\| \psi_\eps\|_{L^2})_{\eps \in (0,1]}$, is strictly positive, i.e.,  there is some $m \in \N$ such that $\|\psi_\eps\|_{L^2} > \eps^m$ holds for small $\eps > 0$.
\end{lemma}

\begin{proof} If $\|\psi\|_{L^2}$ is strictly positive and $\la \psi = 0$ for some $\la \in \gC$, then the family $(|\la_\eps| \|\psi_\eps\|_{L^2})$ is negligible, which can only be true if $(\la_\eps)$ is negligible, thus $\la = 0$ in $\gC$.

Suppose now that $\psi$ is free. Then $\la \in \gC \setminus \{ 0\}$ implies $\la \psi \neq 0$, which in turn gives $|\la| \| \psi\|_{L^2} \neq 0$. Therefore $\|\psi\|_{L^2}$ cannot be a zero divisor in $\widetilde{\R}$ and hence is invertible by \cite[Lemma 2.2]{HKK:14}, which in turn implies the lower bound required in strict positivity.
\end{proof}

We will determine $\sig_{gp}(H)$ in Proposition \ref{gspecprop} below and find that, in particular, 
$$
  [0,\infty) \subseteq \sig_{gp}(H), 
$$
so that all spectral values found in Theorem \ref{specthm} for each $H_\eps$ are indeed generalized eigenvalues. On the other hand, it will turn out that all generalized eigenvalues of $H$ are still infinitesimally close to the subset $[0,\infty)$, which we will make precise based on the following generalized distance concept: For an arbitrary subset $D \subseteq \C$ and $\la \in \gC$ with representative $(\la_\eps)$ we define $d(\la,D) \in \widetilde{\R} \subseteq \gC$ to be the class of $(d(\la_\eps,D))_{\eps \in (0,1]}$, where as usual $d(\mu, D) := \inf\{|\mu - z| \mid z \in D\}$ for any $\mu \in \C$. 

In contrast to the classical complex number set $\C$, it may happen for a closed subset $D$ that a generalized number $\la \in \gC \setminus D$ satisfies $d(\la, D) = 0$. To give a concrete example with $D := [0,\infty)$, consider the class $\la \in \gC$ of the representative $(\la_\eps)$ with $\la_\eps := 0$ if $1/\eps \in \N$ and $\la_\eps := 1$ otherwise. Then $\la \neq r$ in $\gC$ for every $r \in [0,\infty)$, while clearly $d(\la,[0,\infty)) = 0$. 

\begin{proposition}\label{gspecprop} We have $\sig_{gp}(H) = \{ \la \in \gC \mid d(\la, [0,\infty)) = 0\}$.
\end{proposition}

\begin{proof} Let us introduce the temporary notation $E := \{ \la \in \gC \mid d(\la, [0,\infty)) = 0\}$. 

\smallskip

\noindent We show that $E \subseteq \sig_{gp}(H)$: Let $\la \in E$ with representative $(\la_\eps)$ and let $q \in \N$ be arbitrary. 

Since $(d(\la_\eps, [0,\infty))_{\eps \in (0,1]}$ is negligible, we can find $\eps_0 > 0$ and $c > 0$ such that  for all $\eps < \eps_0$ we have $d(\la_\eps, [0,\infty)) \leq c\, \eps^{q+2}$. Thus, for every such $\eps$ we can find $r_\eps \geq 0$ with $|\la_\eps - r_\eps| \leq c\, \eps^{q+1}$. 

Due to Theorem \ref{specthm} the number $r_\eps$ belongs to the spectrum of the self-adjoint operator $H_\eps$, hence is an approximate eigenvector (\cite[Proposition 9.18]{Hall:13}) and we can thus find $w_\eps \in H^2(\R^2)$  with $\|w_\eps\|_{L^2} =1$ and $\|(H_\eps - r_\eps) w_\eps\|_{L^2} \leq \eps^{q+1}$. We may therefore conclude that 
\beq\label{evest}
   \|(H_\eps-\la_\eps)w_\eps\|_{L^2} \leq \|(H_\eps - r_\eps)w_\eps\|_{L^2} + |r_\eps - \la_\eps | \|w_\eps\|_{L^2}
   \leq \eps^{q+1} + c\, \eps^{q+1} = O(\eps^q) \quad (\eps \to 0).
\eeq
Note that we further obtain $H^2$-moderateness of the family $(w_\eps)_{\eps \in (0,1]}$, since 
\begin{multline*}
   \|w_\eps\|_{H^2} \leq \|(I - \Delta) w_\eps\|_{L^2} \leq \underbrace{\|w_\eps\|_{L^2}}_{1} + \|-\Delta w_\eps\|_{L^2}
   = 1 + \| (- \Delta + V_\eps - \la_\eps) w_\eps - V_\eps w_\eps + \la_\eps w_\eps\|_{L^2}\\
   \leq 1 + \|(H_\eps - \la_\eps) w_\eps\|_{L^2} + \|V_\eps\|_{L^\infty} \|w_\eps\|_{L^2} + |\la_\eps| \|w_\eps\|_{L^2}
   = 1 + O(\eps^q) + \|V_\eps\|_{L^\infty} + |\la_\eps|.
\end{multline*}
Therefore, with $w$ denoting the class of $(w_\eps)$ in $\tH$ we have arrived at $H w = \la w$ with $\| w\|_{L^2} = 1$, thus $w$ is free and an eigenvector for $H$ with generalized eigenvalue $\la$.

\smallskip

\noindent We show that also $\sig_{gp}(H) \subseteq E$ holds by proving $\gC \setminus E \subseteq \gC \setminus \sig_{gp}(H)$: By contradiction, suppose $\la \in  \gC \setminus E$ and  $\la \in \sig_{gp}(H)$. Let $(\la_\eps)$ be a  representative of $\la$. Since $\la$ is a generalized eigenvalue, there is an element $\psi \in \tH$ with $\|\psi\|_{L^2}$ invertible in $\widetilde{\R}$ and such that $H \psi = \la \psi$. Therefore, for any representative $(\psi_\eps)$ of $\psi$, the following holds: There is some $m \in \N$ and for every $q \in \N$ some $c > 0$ such that
$$
   \|\psi_\eps\|_{L^2} \geq \eps^m \quad\text{and}\quad \| (H_\eps - \la_\eps) \psi_\eps\|_{L^2} \leq c\, \eps^q \quad (\eps \to 0).
$$

Since $\la \not\in E$ we know that $(d(\la_\eps,[0,\infty))_{\eps\in (0,1]}$ is not negligible. Hence there is some $p \in \N$ and a strictly decreasing sequence $\eps_k \to 0$ ($k \to \infty$) such that
$$
     d(\la_{\eps_k},[0,\infty)) > \eps_k^p \quad (k \in \N).
$$
In particular, $\la_{\eps_k} \in \C\setminus [0,\infty) = \C \setminus \sig(H_{\eps_k})$ and we have the standard $L^2$-operator norm estimate
$$
    \| (H_{\eps_k} - \la_{\eps_k})^{-1} \| \leq \frac{1}{d(\la_{\eps_k},\sig(H_{\eps_k}))} = \frac{1}{d(\la_{\eps_k},[0,\infty))}
$$
for the resolvent of the self-adjoint and positive operator $H_{\eps_k}$ (\cite[Exercises 9.11.17-18]{Hall:13}). The latter implies
$$
    \|(H_{\eps_k} - \la_{\eps_k}) u\|_{L^2} \geq d(\la_{\eps_k},[0,\infty)) \|u\|_{L^2} \quad \text{ for all } u \in H^2(\R^2)
$$
and hence in particular, for every $q \in \N$ and for large $k \in \N$,
$$
    c\, \eps_k^q \geq \| (H_{\eps_k} - \la_{\eps_k}) \psi_\eps\|_{L^2} \geq d(\la_{\eps_k},[0,\infty)) \|\psi_{\eps_k}\|_{L^2}
    \geq \eps_k^p \, \eps_k^m = \eps_k^{p+m},
$$
which gives a contradiction for $k \to \infty$ if $q > p+m$.
\end{proof}

\begin{remark} The classical numbers $ \la \in [0,\infty) = \sig(H_\eps) = \sig_c(H_\eps)$ inside the generalized point spectrum $\sig_{gp}(H) = \{ \la \in \gC \mid d(\la, [0,\infty)) = 0\}$ determined in Proposition \ref{gspecprop} can be interpreted also as generalized eigenvalues with generalized eigenvectors in the sense of the rigged Hilbert space $\S(\R^2) \subseteq L^2(\R^2) \subseteq \S'(\R^2)$ as described in \cite[Chapter I, Section 4]{GV4} or \cite[Chapter VIII, \pg 4.5]{DL:V3}. Namely, there exists a generalized eigenvector $F_\eps \in \S'(\R^2)$ of $H_\eps$  in the sense that
$$
    \forall \vphi \in \S(\R^2)\col\quad  \dis{F_\eps}{H_\eps \vphi} = \la \dis{F_\eps}{\vphi}.
$$
Now one might attempt to relate the family $(F_\eps)_{\eps \in (0,1]}$ to our context of generalized eigenvectors by means of regularization by mollification. However,  it seems not straightforward how to guarantee moderateness in the $H^2$ sense to yield a representative of an eigenfunction class in $\tH \subseteq \tL$.
\end{remark}

\subsection{Aspects of scattering theory}\label{scattsubs}

As already considered briefly in Remark \ref{RemCompSupp}, one may model the regularizing potentials $V_\eps$ with compactly supported functions for each $\eps$, though certainly with growing support in the $y$-direction as $\eps \to 0$ due to condition \eqref{hlimits}. A big advantage then is that standard scattering theory methods become applicable for every fixed $\eps$.

\begin{lemma}
If $V_\eps$ has compact support, then the corresponding multiplication operator on $L^2(\R^2)$ is a \emph{short range perturbation} of $P_0 := -\Delta$ in the sense of \cite[Definition 14.4.1]{Hoermander:V2}. 
\end{lemma}

\begin{proof}
In fact, the two conditions in \cite[Theorem 14.4.2]{Hoermander:V2} are easily seen to hold: Let $\Om$ denote the open unit ball in $\R^2$. 
i) For any $y \in \R^2$, the set $E_y := \{ V_\eps(.+y) u \mid u \in \D(\Omega), \norm{\Delta u}{L^2} \leq 1\}$ is bounded in $H^2_0(\Om)$ thanks to Poincar\'{e}'s inequality, thus precompact in $L^2(\Om)$ by Rellich's embedding theorem, and the continuous inclusion $L^2(\Om) \subseteq L^2(\R^2)$ gives  precompactness in $L^2(\R^2)$. ii) There is some $R > 0$ such that $\supp(V_\eps(.+ y)) \cap \Om = \emptyset$, if $y \in \R^2$ with $|y| \geq R$. Thus for large $j \in \N$ we have $V_\eps(.+y) u = 0$ for every $y \in \R^2$ with $R_{j-1} < |y| < R_j$ and may put $M_j =0$ (notation of  \cite[Theorem 14.4.2]{Hoermander:V2} and recall $R_j = 2^{j-1}$ from \cite[Equation (14.1.2)]{Hoermander:V2}), while for finitely many $j \in \N$ we may put $M_j = C \norm{V_\eps}{L^\infty}$, where $C$ is a constant taken from Poincare's inequality to obtain $\norm{V_\eps(.+y) u}{L^2} \leq \norm{V_\eps}{L^\infty} \norm{u}{L^2} \leq  \norm{V_\eps}{L^\infty} C \norm{\Delta u}{L^2}$; then clearly, $\sum_{j=1}^\infty R_j M_j < \infty$.
\end{proof}

We may combine the property of $V_\eps$ being a short range perturbation with the information from Subsection \ref{specprop} about the spectrum of $H_\eps$ being purely continuous and apply \cite[Theorems 14.4.6, 14.6.4, and 14.6.5]{Hoermander:V2}. We obtain that the \emph{wave operators} $W_\pm^\eps$, defined by
$$
     W_\pm^\eps \vphi = \lim_{t \to \pm \infty} e^{i t H_\eps} e^{i t \Delta} \vphi \qquad (\vphi \in L^2(\R))
$$
are unitary intertwiners for $H_\eps$ and $-\Delta$, in particular, the unitary group providing the solution to \eqref{ourprobleps} can be described in the form
\beq\label{wavegroup}
   e^{- i t H_\eps} = W_+^\eps e^{i t \Delta} (W_+^\eps)^{-1}.
\eeq 
Moreover, in our situation, the \emph{distorted Fourier transforms} (\cite[Definition 14.6.3]{Hoermander:V2}) yield unitary operators  $F_\pm^\eps$ on $L^2(\R^2)$ such  that both compositions $F_+^\eps \circ W_+^\eps$ and $F_-^\eps \circ W_-^\eps$ are equal to the Fourier transform $\F$ on $L^2(\R^2)$. We therefore have as an alternative to \eqref{wavegroup},
\beq\label{dFTgroup}
    F_+^\eps (e^{-i t H_\eps} \vphi)(\theta) = e^{- i t |\theta|^2} (F_+^\eps \vphi)(\theta)  \qquad (\vphi \in L^2(\R), \theta \in \R^2).
\eeq

We recall how the action of $F_+^\eps$ can be described more explicitly on the subspace $B$ of $L^2(\R^2)$ (defined in \cite[Section 14.1]{Hoermander:V2}) consisting of those $L^2$-functions $\vphi$ such that $\norm{\vphi}{B} := \sqrt{a_1(\vphi)} + \sum_{j=2}^\infty 2^{(j-1)/2} \sqrt{a_j(\vphi)} < \infty$, where $a_1(\vphi) := \int_{|x| < 1} |\vphi|^2$ and $a_j(\vphi) := \int_{2^{j-2} < |x| < 2^{j-1}} |\vphi|^2$ ($j \geq 2$); equipped with $\norm{\ }{B}$, $B$ becomes a Banach space with $\D(\R^2)$ as a dense subspace.

Let $R_0(z)$ denote the resolvent of $-\Delta$ for $z \in \C \setminus [0,\infty)$. For $\vphi \in B$ we have by \cite[Theorem 14.5.4]{Hoermander:V2}  that $z \mapsto (I + V_\eps R_0(z))^{-1} \vphi$ gives continuous maps from both $\{ z \in \C \mid \Im z \geq 0, z \neq 0\}$ and $\{ z \in \C \mid \Im z \leq 0, z \neq 0\}$ into $B$, hence for every $\vphi \in B$ and $\la > 0$, the limits
$$
    \vphi_{\la \pm i 0}^\eps := \lim_{\al \downarrow 0} (I + V_\eps R_0(\la \pm i \al))^{-1} \vphi
$$
exist in $B$, in particular, $\vphi_{\la \pm i 0}^\eps \in L^2(\R^2)$. Now $F_+^\eps \vphi$ can be described for any $\vphi \in B$ as follows (\cite[Lemma 14.6.2 and Definition 14.6.3]{Hoermander:V2}): For every $\la > 0$, we have for $\theta \in \R^2$ with $|\theta|^2 = \la$, almost everywhere in the sense of the line measure along the circle of radius $\sqrt{\la}$,
\beq\label{dFT}
    (F_+^\eps \vphi)(\theta) = \F \vphi_{\la + i 0}^\eps(\theta).
\eeq

We may apply the distorted Fourier transform in a more concrete distributional description of the solution $u_\eps$ of \eqref{ourprobleps}. There is no substantial loss for the physics of the problem to consider only initial values from $B$.
\begin{proposition} Suppose that the initial value regularizations $g_\eps$, in addition to \eqref{invalmod},  also satisfy
\beq\label{Bcond}
    \forall \eps > 0 \col \quad g_\eps \in B.
\eeq
Then we have for the distributional action of $F_+^\eps u_\eps(t)$ (at fixed time $t$) on a test function $\psi$ on $\R^2$,
\beq\label{disform}
   \dis{F_+^\eps u_\eps(t)}{\psi} =  \int_0^\infty  r e^{- i t r^2} \int_{S^1}  \F(g_{r^2 + i 0}^\eps)(r \om) \psi(r \om) \, d\om dr,
\eeq
where $d\om$ denotes the line measure on $S^1$ and we define, for every $\la > 0$, 
$$
     g_{\la + i 0}^\eps : = (I + V_\eps R_0(\la + i 0))^{-1} g_\eps.
$$
\end{proposition}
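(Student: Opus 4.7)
The plan is to unfold the distributional pairing $\dis{F_+^\eps u_\eps(t)}{\psi}$ in three steps: first use the intertwining identity \eqref{dFTgroup} to replace the evolution by a pointwise factor, then pass to polar coordinates in $\R^2$, and finally insert the explicit description \eqref{dFT} of the distorted Fourier transform on the subspace $B$.

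First, since $g_\eps \in L^2(\R^2)$ and $e^{-i t H_\eps}$ is unitary, $u_\eps(t) = e^{-i t H_\eps} g_\eps$ lies in $L^2(\R^2)$, and \eqref{dFTgroup} gives
$$
   (F_+^\eps u_\eps(t))(\theta) = e^{-i t |\theta|^2}\, (F_+^\eps g_\eps)(\theta)
$$
for almost every $\theta \in \R^2$. As $F_+^\eps u_\eps(t) \in L^2$, its distributional action on $\psi \in \D(\R^2)$ coincides with the $L^2$-pairing, hence
$$
  \dis{F_+^\eps u_\eps(t)}{\psi} = \int_{\R^2} e^{-i t |\theta|^2} (F_+^\eps g_\eps)(\theta) \, \psi(\theta)\, d\theta.
$$

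Second, introducing polar coordinates $\theta = r\om$ with $r > 0$ and $\om \in S^1$, and using $d\theta = r\, dr\, d\om$, this becomes
$$
  \int_0^\infty r\, e^{-i t r^2} \int_{S^1} (F_+^\eps g_\eps)(r\om) \, \psi(r\om) \, d\om\, dr.
$$

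Third, the hypothesis \eqref{Bcond} ensures that $g_\eps \in B$, so by the discussion following \eqref{dFT} the limit $g_{\la + i 0}^\eps := \lim_{\al \downarrow 0}(I + V_\eps R_0(\la + i\al))^{-1} g_\eps$ exists in $B \subseteq L^2(\R^2)$ for every $\la > 0$, and for each such $\la$ the identity $(F_+^\eps g_\eps)(\theta) = \F(g_{\la + i 0}^\eps)(\theta)$ holds for line-measure-a.e. $\theta$ on the circle $|\theta| = \sqrt{\la}$. Substituting $\la = r^2$ into the inner integral and observing that the line measure on the circle of radius $r$ corresponds under $\om \mapsto r\om$ to the measure $r\, d\om$, one obtains the formula \eqref{disform}.

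The only delicate point is the passage from the circle-by-circle almost-everywhere validity of \eqref{dFT} to the full double integral in polar coordinates. This is handled by Fubini's theorem: the decomposition $d\theta = r\, dr\, d\om$ of Lebesgue measure on $\R^2 \setminus \{0\}$ is exact, so a $d\theta$-null set projects to a set of $r$'s outside of which the slice is $d\om$-null, while for the remaining (full-measure) radii the a.e.-identity \eqref{dFT} applies directly inside the $S^1$-integral. Hence the substitution inside the double integral is legitimate and yields \eqref{disform} as claimed.
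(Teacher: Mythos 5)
Your proposal is correct and follows essentially the same route as the paper's proof: apply the intertwining identity \eqref{dFTgroup}, write the pairing as an $L^2$-integral, pass to polar coordinates, and substitute the description \eqref{dFT} of $F_+^\eps$ on $B$. Your extra remark justifying the circle-by-circle almost-everywhere substitution via the polar decomposition of Lebesgue measure is a detail the paper leaves implicit, but it does not change the argument.
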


\begin{proof}
Upon applying \eqref{dFTgroup} and \eqref{dFT}, we obtain 
\begin{multline*}
   \dis{F_+^\eps u_\eps(t)}{\psi} = \dis{ e^{- i t |.|^2} F_+^\eps g_\eps }{\psi} = 
   \int_{\R^2} e^{- i t |\theta|^2} F_+^\eps g_\eps(\theta) \psi(\theta) \, d\theta\\ =
   \int_{S^1} \int_0^\infty  e^{- i t r^2} F_+^\eps g_\eps(r \om) \psi(r \om) \,r \,  d r d\om = 
    \int_{S^1} \int_0^\infty  e^{- i t r^2} \F(g_{r^2 + i 0}^\eps)(r \om) \psi(r \om) \,r \,  d r d\om\\
    =  \int_0^\infty  r e^{- i t r^2} \int_{S^1}  \F(g_{r^2 + i 0}^\eps)(r \om) \psi(r \om) \, d\om dr.
\end{multline*}
\end{proof}

\begin{remark} (On the ``boundary value'' $R_0(\la + i 0)$ of the resolvent of $-\Delta$.)
The resolvent $R_0(z) = (- \Delta - z)^{-1}$ is defined and holomorphic for $z \in \C \setminus [0,\infty)$, but the above constructions made implicit use of the expression $R_0(\la + i 0)$ for $0 < \la = \lim_{\al \downarrow 0} (\la + i \al)$ in describing the action of the distorted Fourier transform on the function space $B$ with $\D(\R^2) \subseteq B \subseteq L^2(\R^2)$ in \eqref{dFT}. Let us try to understand now $R_0(\la + i 0) f$ or its Fourier transform $\F( R_0(\la + i 0) f)$, in terms of distributions on $\R^2$ when $f$ is a test function from $\D(\R^2)$.  For $z \in \C \setminus [0,\infty)$ we certainly have
$$
      \F( R_0(z) f)(\theta)  =  \frac{\FT{f}(\theta)}{|\theta|^2 - z}  \qquad (\theta \in \R^2)
$$
and note that $\theta \mapsto 1 / (|\theta|^2 - z)$ is a smooth bounded function, hence belongs to $\S'(\R^2)$. Let $r_0(z) \in \S'(\R^2)$ be its inverse Fourier transform, then  we may write (with convolution $\S' * \S$)
$$
    R_0(z) f = r_0(z) * f \quad\text{or} \quad  \F( R_0(z) f) = \F r_0(z) \cdot \FT{f}.
$$
Inserting $z = \la + i \mu$ with $\la \in \R$,  $\mu \geq 0$ and letting $\mu \to 0+$ we would hope that the above formulae allow for a ``boundary value'' involving a distributional kernel
$$ 
    r_0(\la + i 0) = \S'\text{-}\lim_{\mu \to 0+} r_0(\la + i \mu). 
$$
We may equivalently try to define $\F r_0(\la + i 0)$ via $\lim_{\mu \to 0+} v_\mu$ in $\S'(\R^2)$, where
$$
     v_\mu (\theta) := \frac{1}{|\theta|^2 - \la - i \mu} \qquad (\theta \in \R^2, \la \in \R, \mu > 0).
$$
Let $g \in \S(\R^2)$, use polar coordinates $\theta = r \om$ with $r \geq 0$, $\om \in S^1$, and introduce $Mg(r) := \int_{S^1} g(r \om) \,d\om$ to deduce in a first step (with the change of coordinates $r^2 = s$ in the last equality)
$$
   \dis{v_\mu}{g} = \int_{\R^2} \frac{g(\theta)}{|\theta|^2 - \la - i \mu} \, d\theta = 
   \int_0^\infty \frac{r}{r^2 - \la - i\mu}  \; \int_{S^1} g(r \om) \,d\om \,dr
    = \frac{1}{2} \int_0^\infty \frac{Mg(\sqrt{s})}{s - \la - i \mu} \, ds.
$$

The easy case is $\la < 0$, in which $R_0(\la)$ is already well-defined and $1/(s - \la) = \lim_{\mu \to 0} 1/(s - \la - i \mu)$ is a bounded function on $[0,\infty)$, thus we directly obtain also
$$
   \dis{\F r_0(\la + i 0)}{g} =  \lim_{\mu \to 0+}\dis{v_\mu}{g} = 
       \int_0^\infty \int_{S^1} \frac{g(\sqrt{s}\om)}{2(s - \la)} \, d\om\,ds.
$$

The relevant case $\la \geq 0$ is tricky, since $1/(s - \la)$ has a nonintegrable singularity at $s = \la$. Upon a shift by $\la$, we would like to interpret the above integral in terms of the one-dimensional distribution $(s - i 0)^{-1}$, defined as distributional boundary value of the holomorphic function $z \mapsto 1/z$ in the lower complex half plane (\cite[Theorem 3.1.11]{Hoermander:V1}). The action of $(s - \la - i 0)^{-1}$ on a test function $\phi$ on $\R$ can be given by (cf.\ \cite[Equations (3.2.10), (3.2.10'), both on page 72]{Hoermander:V1})
$$
   \dis{(s - \la - i 0)^{-1}}{\phi} = - \int_{\R} (\log(|s - \la|) \phi'(s) \, ds + i \pi \phi(\la).
$$
We see that its suffices to have $\phi$ as $C^1$ function here.  The remaining difficulty now is that $\phi(s) := Mg(\sqrt{s})$ is so far only defined for $s \geq 0$ and will in general not be $C^1$ up to $s=0$. However, at least on the subspace of test functions $g \in \S'(\R^2)$ with $g(0) = 0$ and $\nabla g(0) = 0$, we always obtain a $C^1$ extension of $Mg$  upon setting $Mg(\sqrt{s}) := 0$ for $s < 0$ and may then write
$$
    \dis{\F r_0(\la + i 0)}{g} = \lim_{\mu \to 0+}\dis{v_\mu}{g} = \frac{1}{2} \dis{(s - \la - i 0)^{-1}}{Mg(\sqrt{s})}.
$$
\end{remark}

\section{Alternative approximations and regularizations: Cauchy problems with source term or initial value replacing the potential} \label{insopo}

We now leave the detailed specifications of the previous sections, vary certain aspects of the modeling and allow for  simplifications and approximations. Interpreting the term involving the potential, $V_\eps u_\eps$, as a source term  (right-hand side) for the equation, we investigate connections to approximate solution methods used in physics.

Assuming that we would have convergence $u_\eps \to u$ in $C(\R, \S'(\R^2))$ as $\eps \to 0$, then necessarily
$$
   g_\eps = u_\eps(0) \to u(0),
$$
and, in the sense of distributions, 
\begin{equation}\label{Vuconv}
  i V_\eps u_\eps = i \Delta u_\eps - \d_t u_\eps \to i \Delta u - \d_t u.
\end{equation}
We would then obtain that both $u_\eps$ and $V_\eps u_\eps$ converge as distributions, suggesting that  $u_\eps \to 0$ near $\{0\} \times (\R \setminus S)$  since $V_\eps \to \infty$ on that same set, where classical particles should be blocked. Thus the approximative idea to consider the set $S$ as sources of waves propagating into the region $x>0$, as it is often described in the physics literature, can be given some mathematical support.

Suppose that instead of employing truly a (regularized) potential function in the Schr\"odinger operator we look at a simplified Cauchy problem, where the influence of the ``interaction product'' $V u$ of the potential with the wave function is somehow replaced by a source term $F$ and an initial value $f$ so that we have the following inhomogeneous Cauchy problem without potential
\beq\label{WithSource}
  \d_t w = i\, \Delta w  - i F, \quad  w|_{t=0} = f. 
\eeq
The solution can be written in terms of the fundamental solution  $E \in C(\R,\S'(\R^2))$, $E(t;x,y) = \exp(i (x^2 + y^2)/ 4t)/(4 \pi i t)$, satisfying
$$
    \d_t E - i \Delta E = 0, \quad E(0) = \de, 
$$ 
with Duhamel's principle in the form
$$
   w(t) = E(t) * f - i \int_0^t E(t-\tau) * F(\tau) \, d\tau.
$$
Equivalently, upon applying a spatial Fourier transform to \eqref{WithSource} and using a notation like  $\FT{w}(t,\xi,\eta)$ etc., we obtain an ordinary differential equation with initial condition
$$
   \d_t \FT{w}  (t,\xi, \eta) = - i (|\xi|^2 + |\eta|^2) \FT{w}  (t,\xi, \eta) -  i \FT{F}(t,\xi,\eta),
   \quad \FT{w}(0,\xi,\eta) = \FT{f}(\xi,\eta).
$$
Employing the abbreviation $\theta=(\xi,\eta)$, the spatially Fourier transformed solution is given by
\beq\label{solformula}
     \FT{w}(t,\theta) = e^{- i t |\theta|^2} \FT{f}(\theta) 
       - i \int_0^t e^{- i (t - \tau) |\theta|^2} \, \FT{F}(\tau,\theta)\, d\tau.
\eeq 
(Checking with \cite[Theorem 7.6.1]{Hoermander:V1} also confirms that $\FT{E}(t;\xi,\eta) = \exp(- i t (\xi^2 + \eta^2))$.)

\subsection{The approximate solutions from theoretical physics}\label{thphresults}
An inspection of the discussions in \cite{Beau:12} and \cite{Manoukian:89} shows that the basic solution formulae obtained in theoretical physics can be put into the context of \eqref{WithSource} as follows: In a first step, let $w_0$ denote the solution to \eqref{WithSource} with initial value $f = \de_{(-x_0,0)}$ and $F = 0$, where $x_0 > 0$; thus, $w_0$ corresponds to the wave function of a free particle emitted at time $t=0$ at the location $(-x_0,0)$ and is given by 
$w_0(t,x,y) = (E(t) * \de_{(-x_0,0)})(x,y) = E(t,x+x_0,y)$. 
Suppose that $y \mapsto b_0(y)$ describes the pattern and shapes of slits in the plane $x=0$, but contrary to the potential function above now with value $1$ for passing through and $0$ for blocking, e.g., \cite{Beau:12} uses the characteristic function of one or two bounded intervals, and that the particle passes through $x=0$ at time $t_0 > 0$. 

In a second step, we consider now the solution $w_1$ to a Cauchy problem for the free Schr\"odinger equation with initial value  corresponding to the particle represented by $w_0$ passing through the slits at time $t_0$, namely 
\beq\label{u1Problem}
  \d_t w_1 = i\, \Delta w_1, \quad  w_1(t_0,x,y)=  \de_0(x) w_0(t_0,0,y) b_0(y)  
      =  \de_0(x) E(t_0,x_0,y) b_0(y) =: f_0(x,y). 
\eeq  
The qualitative properties of the intensity distribution $y \mapsto |w_1(t_0 + T, x_1, y)|^2$ found on a screen located at distance $x_1 > 0$ from the slits and at time $t_0 + T$, $T > 0$, are then studied in detail in \cite{Beau:12, Manoukian:89} for appropriate asymptotic relations between the relevant parameters from physics (de Broglie wavelength of the particle, $t_0$, and $T$) and geometry ($x_0$,  $x_1$ and shape of the slits) and seem to give reasonable approximations to the interference features seen in actual experiments. 

We can easily obtain an explicit expression from \eqref{u1Problem} upon applying the above solution formulae to $w_1(t+t_0)$, i.e., $w_1(t) = E(t-t_0) * f_0$, noting that the $x$-convolution is trivial due to the factor $\de_0(x)$ in $f_0$, i.e., $w_1(t,x,y) = \Big(E(t-t_0,x,.) * \big(E(t_0,x_0,.) b_0(.)\big)\Big)(y)$, and writing out the remaining $y$-convolution as an integral:
\begin{multline*}
   w_1(t,x,y) = \int_{-\infty}^\infty E(t-t_0,x,y-s) E(t_0,x_0,s) b_0(s) \, ds\\ 
   = \frac{1}{(4 \pi i)^2 (t-t_0) t_0} 
      \int_{-\infty}^\infty e^{i (x^2 + (y-s)^2)/4(t-t_0)} e^{i(x_0^2 + s^2)/4 t_0} b_0(s)\, ds \\ =
         \frac{- e^{\frac{i}{4}(\frac{x^2 + y^2}{t-t_0} + \frac{x_0^2}{t_0})}}{16 \pi^2 (t-t_0) t_0}
       \int_{-\infty}^\infty e^{i (s^2 - 2 ys )/4(t-t_0)} e^{i s^2/4 t_0} b_0(s)\, ds\\ = 
       \frac{- e^{\frac{i}{4}(\frac{x^2 + y^2}{t-t_0} + \frac{x_0^2}{t_0})}}{16 \pi^2 (t-t_0) t_0}
       \int_{-\infty}^\infty e^{\frac{-i s y}{2(t-t_0)}} \underbrace{e^{\frac{i t s^2}{4 t_0 (t-t_0)}}}_{\phi(t,t_0,s)} b_0(s)\, ds =
        \frac{- e^{\frac{i}{4}(\frac{x^2 + y^2}{t-t_0} + \frac{x_0^2}{t_0})}}{16 \pi^2 (t-t_0) t_0} \, 
         \F\Big( \phi(t,t_0,.) \, b_0(.) \Big)\big(y/2(t-t_0)\big).
\end{multline*}
Therefore, the corresponding intensity distribution as a function of $y$ is proportional to
$$
   16^2 \pi^4 T^2 t_0^2 \cdot |w_1(t_0 + T, x_1, y)|^2 = \Big| \F\Big( \phi_0 b_0 \Big)\big(y/2T\big)\Big|^2,
$$
where $ \phi_0(s) := \phi(t_0 + T,t_0,s) = \exp(i (t_0 + T)s^2/t_0 T)$ and we recall that $b_0$ is the characteristic function of the slits.

\subsection{A plausibility check}\label{apc}
Let $u_{0,\eps}$ denote the solution to the free Schr\"odinger equation with initial value $g_\eps$, i.e., 
$$
   \d_t u_{0,\eps} = i \Delta u_{0,\eps}, \quad u_{0,\eps}|_{t=0} = g_\eps.
$$
Comparing with $w_0$ in Subsection \ref{thphresults} we have:
If $g_\eps \to \de_{(-x_0,0)}$ as $\eps \to 0$, then $u_{0,\eps} \to w_0$.

Consider the difference between the scattered and the free solution $w_\eps := u_\eps - u_{0,\eps}$,
which is characterized by
$$
   \d_t w_\eps = i \Delta w_\eps - i V_\eps u_\eps, \quad w_\eps|_{t=0} = 0.
$$
Interpreting this via Equation \eqref{WithSource} we have the ``source term'' $V_\eps u_\eps$ and the corresponding solution formula implies
$$
   w_\eps(t) = - i \int_0^t E(t - \tau) * (V_\eps u_\eps(\tau)) \, d \tau.
$$
Now from the set-up of $V_\eps$, in case we think of $b_\eps$ approximationg $\de_0$, we could argue for an immediate plausible approximation in the integrand by using
\beq\label{ap1}
  F_\eps(t,x,y):= V_\eps(x,y) u_\eps(t,x,y) \approx  \de_0(x) h_\eps(y) u_\eps(t,0,y).
\eeq

In view of the observations following  \eqref{Vuconv}, arguing that the ``free solution from the region $x < 0$ enters through the slits'' (represented by the set $S$) at an instance of time $t_0$ according to a ``typical travel time from the source to the plane $x=0$'', and in combination with Remark \ref{remindicator} we might go a further bold step and consider the following additional ``approximate replacement'' 
\beq\label{ap2}
     \beta_\eps(t,x,y):= \delta_0(x)
     h_\eps(y) u_\eps(t,0,y) \approx 
     \delta_0(x) 1_S(y) u_{0,\eps}(t,0,y) \de_{t_0}(t).
\eeq

Combining \eqref{ap1} and \eqref{ap2} we would be using (compare with Proposition \ref{betaprop} below)
\beq\label{VuApprox}
   F_\eps(t,x,y) \approx \beta_\eps(t,x,y) \approx \de_{t_0}(t) \de_0(x)  1_S(y) u_{0,\eps}(t_0,0,y),
\eeq
which yields the following simplified approximate solution formula 
\beq\label{wApprox}
   w_\eps(t,x,y) \approx -i \Big(E(t - t_0,x, .) *  \big( u_{0,\eps}(t_0,0,.) 1_S(.)\big)\Big) (y).
\eeq

On the other hand, if we consider $w_1$ given by \eqref{u1Problem} in Subsection \ref{thphresults} and put $\widetilde{w_1}(t) := -i H(t - t_0) w_1(t)$, then an elementary computation  shows that
$$
  \d_t \widetilde{w_1} = i \Delta \widetilde{w_1} - i \de_{t_0}(t) \de_0(x) b_0(y) w_0(t_0,0,y), 
  \quad \supp \widetilde{w_1} \subseteq [t_0,\infty) \times \R^2.
$$ 
Thus, $\widetilde{w_1}$ satisfies the Cauchy problem  \eqref{WithSource} with initial value $0$ (at time $t=0$) and source term 
$$
   F(t,x,y) = \de_{t_0}(t) \de_0(x) b_0(y) w_0(t_0,0,y),
$$
which nicely matches  \eqref{VuApprox} in the typical case where $b_0 = 1_S$ and implies 
$$
    \widetilde{w_1}(t,x,y) =  -i \Big(E(t - t_0,x, .) *  \big( w_0(t_0,0,.) 1_S(.)\big)\Big) (y),
$$
which  happens to agree with the distributional limit, as $\eps \to 0$, of the right-hand side in \eqref{wApprox}, if $g_\eps \to \de_{(-x_0,0)}$.

\subsection{Improving on the plausibility of \eqref{VuApprox}}\label{iapc} The coherence result in Corollary \ref{cor} tells us that in case of an $H^1$ initial value and a smooth bounded potential the generalized solution is associated with the solution in $C([0,T],H^1(\R^2))$ obtained from the variational method. In fact, as the proof of Corollary \ref{cor} (to be found in \cite[Corollary 3.2]{Hoermann:11}) shows, we even have convergence in this latter function space. Using this special case as a motivation, we may consider the convergence property
\beq\label{con}
  u_\eps \to u \quad (\eps \to 0)\quad \text{ in } C([0,T],H^1(\R^2))
\eeq 
as a basis of a more general analysis. 
In such circumstance we have support for the approximation \eqref{ap1} under the following technical conditions for the  regularized potential $V_\eps(x,y) = b_\eps(x) h_\eps(y)$:
\beq\label{pot}
    b_\eps \geq 0, \quad \int_\R b_\eps(x) \, dx = 1, \quad \supp(b_\eps) \subseteq [-c_\eps,c_\eps], 
    \quad \norm{h_\eps}{L^\infty(\R)} \sqrt{c_\eps} \to 0 \; (\eps \to 0).
\eeq
In fact, the following proposition justifies the approximations chosen in \eqref{ap1} and \eqref{ap2} in the context of distributions. Note that in this context neither $V_\eps u_\eps$ nor $\be_\eps$ is distributionally convergent.

\begin{proposition}\label{betaprop}
 Suppose \eqref{con} and \eqref{pot} hold and let $\be_\eps(t)$ denote the distribution encountered in \eqref{ap1} and \eqref{ap2}, i.e.,
$$
    \dis{\be_\eps(t)}{\vphi} :=  \int_\R h_\eps(y) u_\eps(t,0,y) \vphi(0,y) \, dy   \qquad (\vphi \in \D(\R^2)),
$$
then we have $\lim\limits_{\eps \to 0} \big(V_\eps u_\eps(t) - \be_\eps(t)\big) = 0$ in $\D'(\R^2)$ for every $t \in [0,T]$.
\end{proposition} 
\begin{proof} For any $\vphi \in \D(\R^2)$, we have (since $\int b_\eps = 1$ and $h_\eps, b_\eps \geq 0$)
\begin{multline*}
  |\dis{V_\eps u_\eps(t) - \be_\eps(t)}{\vphi}| = |\int_\R h_\eps(y) 
     \int_\R b_\eps(x) \big( u_\eps(t,x,y) \vphi(x,y)  - u_\eps(t,0,y) \vphi(0,y) \big) \, dx \, dy|\\
  \leq \int_\R h_\eps(y) 
     \int_\R b_\eps(x)\, |\underbrace{ u_\eps(t,x,y) \vphi(x,y)  - u_\eps(t,0,y) \vphi(0,y) }_{=: \ga_\eps(t,x,y)}| \, dx \, dy,
\end{multline*}
where we may write
$$
   \ga_\eps(t,x,y) = \int_0^{x} \diff{s}(u_\eps(t,s,y) \vphi(s,y) ) \, ds =
   \int_0^x  \big( \d_x u_\eps(t,s,y) \vphi(s,y) + u_\eps(t,s,y) \d_x \vphi(s,y) \big) \, ds.
$$
Inserting this above and using the fact that $|x| \leq c_\eps$ in $\supp(b_\eps)$ we obtain 
\begin{multline*}
   |\dis{V_\eps u_\eps(t) - \be_\eps(t)}{\vphi}| \leq 
   \int_\R h_\eps(y) \int_\R b_\eps(x) \int_{-c_\eps}^{c_\eps} |\d_x u_\eps(t,s,y) \vphi(s,y) + u_\eps(t,s,y) \d_x \vphi(s,y)|
   \, ds \, dx \, dy \\
   =  \int_\R h_\eps(y) \int_{-c_\eps}^{c_\eps} |\d_x u_\eps(t,s,y) \vphi(s,y) + u_\eps(t,s,y) \d_x \vphi(s,y)|
   \, ds \, dy\\ 
   \leq \norm{h_\eps}{L^\infty(\R)} \int_{[-c_\eps,c_\eps] \times \R} \big(
   |\d_x u_\eps(t,s,y) \vphi(s,y)| + |u_\eps(t,s,y) \d_x \vphi(s,y)| \big) \, d(s,y)\\ \leq 
   \norm{h_\eps}{L^\infty(\R)} \Big( \norm{\d_x u_\eps(t)}{L^2(M_\eps)} \norm{\vphi}{L^2(M_\eps)} + 
   \norm{u_\eps(t)}{L^2(M_\eps)} \norm{\d_x \vphi}{L^2(M_\eps)} \Big)\\ \leq
   C_0 \norm{h_\eps}{L^\infty(\R)} \norm{u_\eps(t)}{H^1(M_\eps)} \norm{\vphi}{H^1(M_\eps)}, 
\end{multline*}
where we have put $M_\eps := [-c_\eps, c_\eps] \times \R$ and $C_0$ is some positive constant. If $\supp(\vphi) \subseteq [-l_\vphi,l_\vphi]^2$, then we clearly have
$$
    \norm{\vphi}{H^1(M_\eps)}^2 \leq 3 \cdot 2 l_\vphi  \cdot 2 c_\eps \cdot \norm{\vphi}{W^{1,\infty}(\R^2)}^2
$$
and this implies (with some constant $C_1$ depending on $\vphi$)
\beq\label{intermediate}
   |\dis{V_\eps u_\eps(t) - \be_\eps(t)}{\vphi}| \leq C_1\, \sqrt{c_\eps}\, \norm{h_\eps}{L^\infty(\R)} \norm{u_\eps(t)}{H^1(M_\eps)}.
\eeq
Due to \eqref{con} we have $\norm{u_\eps(t)}{H^1(\R^2)} \to \norm{u(t)}{H^1(\R^2)}$ as $\eps \to 0$, hence there is some $1 > \eps_0 > 0$ such that $\norm{u(t)}{H^1(M_\eps)} \leq \norm{u_\eps(t)}{H^1(\R^2)} \leq 2 \norm{u(t)}{H^1(\R^2)}$ for all $0 < \eps < \eps_0$, inserted into \eqref{intermediate} we finally obtain from \eqref{pot} that
$$
    |\dis{V_\eps u_\eps(t) - \be_\eps(t)}{\vphi}| \leq 2 C_1 \norm{u(t)}{H^1(\R^2)} \, \sqrt{c_\eps}\, \norm{h_\eps}{L^\infty(\R)}
    \to 0 \qquad (\eps_0 > \eps \to 0).
$$
\end{proof}

\appendix
\section{Generalized function solutions and coherence properties} 

In this appendix, we review the main results of \cite{Hoermann:11}. Before going into details, we recall a few basics from the theory of Colombeau generalized functions. This can be seen as an extension of the so-called \emph{sequential approach to distributions}, where each distribution is represented by approximating regularizing weakly convergent sequences modulo null sequences. 

The fundamental idea of Colombeau-type regularization methods is to model nonsmooth objects by approximating nets of smooth functions, convergent or not, but with \emph{moderate} asymptotics and to identify regularizing nets whose differences compared to the moderateness scale are \emph{negligible}. For a modern introduction to Colombeau algebras we refer to \cite{GKOS:01}. Here we will also make use of constructions and notations from \cite{Garetto:05b}, where  generalized functions based on a locally convex topological vector space $E$ are defined:
Let $E$ be a locally convex topological vector space whose topology is given by the family of seminorms $\{p_j\}_{j\in J}$. The elements of  
$$
 \mathcal{M}_E := \{(u_\eps)_\eps\in E^{(0,1]}:\, \forall j\in J\,\, \exists N\in\N\quad p_j(u_\eps)=O(\eps^{-N})\, \text{as}\, \eps\to 0\}
$$
and
$$
 \mathcal{N}_E := \{(u_\eps)_\eps\in E^{(0,1]}:\, \forall j\in J\,\, \forall q\in\N\quad p_j(u_\eps)=O(\eps^{q})\, \text{as}\, \eps\to 0\},
$$ 
are called \emph{$E$-moderate} and \emph{$E$-negligible}, respectively. With operations defined componentwise, e.g., $(u_\eps) + (v_\eps) := (u_\eps + v_\eps)$ etc., $\mathcal{N}_E$ becomes a vector subspace of $\mathcal{M}_E$.  We define the \emph{generalized functions based on $E$} as the factor space $\G_E := \mathcal{M}_E / \mathcal{N}_E$. If $E$ is a differential algebra, i.e., an associative commutative algebra possessing commuting linear maps $\d_j \col E \to E$ ($j=1,\ldots,n$) that satisfy the Leibniz rule $\d_j( f \cdot g) = (\d_j f) \cdot g + f \cdot \d_j g$,  then $\mathcal{N}_E$ is an ideal in $\mathcal{M}_E$ and $\G_E$ is  a differential algebra as well.

Particular choices of $E$ reproduce the standard Colombeau algebras of generalized functions. For example, $E=\C$ with the absolute value as norm yields the generalized complex numbers $\G_{\C} = \widetilde{\C}$; for $\Omega \subseteq \R^d$ open, $E=\Cinf(\Omega)$ with the topology of compact uniform convergence of all derivatives provides the so-called special Colombeau algebra $\G_{\Cinf(\Omega)}=\G(\Omega)$. Recall that $\Omega \mapsto \G(\Omega)$ is a fine sheaf, thus, in particular, the restriction $u|_B$ of $u\in\G(\Omega)$ to an arbitrary open subset $B \subseteq \Omega$ is well-defined and yields $u|_B \in \G(B)$. Moreover, we may embed $\D'(\Omega)$ into $\G(\Omega)$ by appropriate localization and convolution regularization. 

If $E \subseteq \D'(\Omega)$, then certain generalized functions can be projected into the space of distributions by taking  weak limits: We say that $u \in \G_E$ is \emph{associated} with $w \in \D'(\Omega)$, if $u_\eps \to w$ in $\D'(\Omega)$ as $\eps \to 0$ holds for any (hence every) representative $(u_\eps)$ of $u$. This fact is also denoted by $u \approx w$. 

Consider open strips of the form $\Omega_T = \R^n \times\, ]0,T[ \,\subseteq \R^{n+1}$ (with $T > 0$ arbitrary) and  the spaces $E = H^\infty({\Omega_T}) = \{ h \in \Cinf(\Omega_T) : \d^\al h \in L^2(\Omega_T) \; \forall \al\in \N^{n+1}\}$ with the family of (semi-)norms 
$$
  \norm{h}{H^k} = \Big( \sum_{|\al| \leq k} 
    \norm{\d^\al h}{L^2}^2\Big)^{1/2}
   \quad (k\in \N), 
$$
as well as   
$E = W^{\infty,\infty}({\Omega_T}) = \{ h \in \Cinf(\Omega_T) : \d^\al h \in L^\infty(\Omega_T) \; \forall \al\in \N^{n+1}\}$  with the family of (semi-)norms 
$$
  \norm{h}{W^{k,\infty}} = \max_{|\al| \leq k} \norm{\d^\al h}{L^\infty} \quad (k\in \N). 
$$
Clearly, $\Omega_T$  satisfies the strong local Lipschitz property \cite[Chapter IV, 4.6, p.\ 66]{Adams:75}, hence every element of $H^\infty(\Omega_T)$ and $W^{\infty,\infty}(\Omega_T)$ belongs to $\Cinf(\ovl{\Omega_T})$ by the Sobolev embedding theorem \cite[Chapter V, Theorem 5.4, Part II, p.\ 98]{Adams:75}.

We will employ the notation 
$$
  \G_2 (\R^n \times [0,T]) := \G_{H^\infty({\Omega_T})}
     \quad\text{ and }\quad
   \G_\infty (\R^n \times [0,T]) := 
   \G_{W^{\infty,\infty}({\Omega_T})}.
$$
Thus, we will represent a generalized function $u \in \G_{2}(\R^n \times [0,T])$ by a net $(u_\eps)$ with the moderateness property
$$
    \forall k \, \exists m: \quad 
    \norm{u_{\eps}}{H^k} = O(\eps^{-m}) \quad (\eps \to 0).
$$
If $(\widetilde{u_{\eps}})$ is another representative of $u$, then 
$$
    \forall k \, \forall p: \quad 
    \norm{u_{\eps} - \widetilde{u_{\eps}}}{H^k} = O(\eps^{p}) 
    \quad (\eps \to 0).
$$
Similar constructions and notations are used in case of $E = H^\infty(\R^n)$ and $E = W^{\infty,\infty}(\R^n)$. Note that by Young's inequality (\cite[Proposition 8.9.(a)]{Folland:99}) any standard convolution regularization with a scaled mollifier of Schwartz class provides embeddings $L^2 \hookrightarrow \G_{2}$ and $L^p \hookrightarrow \G_{\infty}$ ($1 \leq p \leq \infty$).  

We give below the main existence and uniqueness result for the Cauchy problem for the Schr\"odinger equation as it was stated already in (\ref{SCPDE}-\ref{SCIC}) and that we recall here: Given generalized functions $c_1, \ldots, c_n$, $V$, and $f$ on $\R^n \times [0,T]$ and $g$ a generalized function on $\R^n$, find a unique generalized function $u$ on $\R^n \times [0,T]$ solving
\begin{align*}
  \d_t u - i\, \sum_{k=1}^n   \d_{x_k} (c_k 
    \d_{x_k} u) 
    + i V u &= f\\
    u \mid_{t=0} &= g.  
\end{align*}

\begin{theorem}\label{exunthm} Let $c_k$ 
 ($k=1,\ldots,n$) and $V$ be generalized functions in $\G_{\infty}(\R^n \times [0,T])$ possessing representing nets of real-valued functions, $f$ in  $\G_{2}(\R^n \times [0,T])$, and $g$ be in $\G_{2}(\R^n)$. Suppose\\ 
 (a) $c_k$ ($k=1\ldots,n$) and $V$ are of  log-type, that is,  for some (hence every) representative $(c_{k \eps})$ of $c_k$  and $(V_\eps)$ of $V$ we have 
$\norm{\d_t {c_{k \eps}}}{L^\infty} = O(\log({1}/{\eps}))$ and  $\norm{\d_t {V_{\eps}}}{L^\infty} = O(\log({1}/{\eps}))$ as $\eps \to 0$\\
and\\ 
(b) that the positivity conditions $c_{k \eps}(x,t) \geq c_0$ for all $(x,t) \in \R^n \times [0,T]$, $\eps\in\,]0,1]$, $k=1,\ldots,n$ with some constant $c_0 > 0$ hold (hence  with $c_0 / 2$  for any other representative and small $\eps$).\\[1mm]
Then the Cauchy problem (\ref{SCPDE}-\ref{SCIC})
has a unique solution $u \in \G_{2}(\R^n \times [0,T])$. 
\end{theorem}

We note that a regularization of an arbitrary finite-order distribution which meets the log-type conditions on the coefficients $c_k$ and $V$ in the above statement is easily achieved by employing a re-scaled mollification process as described in \cite{O:89}. 

In case of smooth coefficients a simple integration by parts argument shows that any solution to the Cauchy problem obtained from the variational method as in \cite[Chapter XVIII, \pg 7, Section 1]{DL:V5}) is a solution in the sense of distributions as well. In addition, the following result from \cite{Hoermann:11} shows further coherence with the generalized function solution.

\begin{corollary}\label{cor} Let $V$ 
and $c_k$ ($k=1,\ldots,n$) belong to $C^\infty(\Omega_T) \cap L^\infty(\Omega_T)$ with bounded time derivatives of first-order,  $g_0 \in H^1(\R^n)$, and $f_0 \in C^1([0,T],L^2(\R^n))$. Let $u$ denote the unique Colombeau generalized solution to the Cauchy problem (\ref{SCPDE}-\ref{SCIC}), where $g$, $f$ denote standard embeddings of $g_0$, $f_0$, respectively. Then $u \approx w$, where $w \in C([0,T],H^1(\R^n))$ is the unique distributional solution obtained from the variational method. 
\end{corollary}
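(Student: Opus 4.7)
My plan is to pick a convenient representative $(u_\eps)$ of $u$, identify each $u_\eps$ with a classical variational solution of a perturbed Cauchy problem, and then compare it to $w$ via duality against the backward adjoint equation.

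For the representatives, let $\rho_\eps$ be a standard compactly supported mollifier (in space or space-time, as appropriate) and set $V_\eps := V * \rho_\eps$, $c_{k,\eps} := c_k * \rho_\eps$, $g_\eps := g_0 * \rho_\eps$, and $f_\eps := f_0 * \rho_\eps$, with $V, c_k, f_0$ extended off $\Omega_T$ preserving $L^\infty$-bounds, the lower bound $c_0$ on $c_k$, and the boundedness of $\d_t V, \d_t c_k$. These choices trivialize the log-type assumption (a) of Theorem~\ref{exunthm} ($O(1)$-bounds on $\|\d_t V_\eps\|_{L^\infty}$ etc.), and the positivity (b) survives for small $\eps$. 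Theorem~\ref{exunthm} then produces the Colombeau solution $u = [(u_\eps)]$; since the regularized coefficients and data are smooth, each $u_\eps \in C([0,T]; H^\infty(\R^n))$ and agrees with the classical variational solution of the regularized Cauchy problem (as per the integration-by-parts remark preceding the corollary). The standard $L^2$ energy estimate---exploiting the cancellation of the skew-Hermitian principal part and potential afforded by real coefficients---gives the uniform bound $\sup_\eps \|u_\eps\|_{C([0,T]; L^2(\R^n))} < \infty$.

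To show $u_\eps \to w$ in $\D'(\Omega_T)$, fix $\phi \in \D(\Omega_T)$ and construct $\psi$ as the variational solution of the backward adjoint Cauchy problem $-\d_t \psi + i\sum_k \d_{x_k}(c_k \d_{x_k}\psi) - iV \psi = \phi$, $\psi(\cdot, T) = 0$ (which becomes a standard forward Schr\"odinger Cauchy problem with smooth coefficients and smooth, compactly supported source after time reversal and complex conjugation). By propagation of regularity from the smoothness of $\phi$ and the coefficients, $\psi$ has enough spatial regularity in $L^2$ (concretely, $\psi, \d_{x_k}\psi, \d^2_{x_k}\psi \in L^2(\Omega_T)$) for the manipulations below. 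Integration by parts in both $t$ and $x$ on the equations $L_\eps u_\eps = f_\eps$ and $L w = f_0$, tested against $\psi$ (the boundary term at $t=0$ capturing the difference of the initial data, the one at $t=T$ vanishing because $\psi(T)=0$), produces the Green-type identity
\begin{align*}
(u_\eps - w, \phi)_{L^2(\Omega_T)} = {}
& (g_\eps - g_0, \psi(\cdot, 0))_{L^2(\R^n)} + (f_\eps - f_0, \psi)_{L^2(\Omega_T)} \\
& + i\sum_k \bigl((c_k - c_{k,\eps})\d_{x_k} u_\eps,\; \d_{x_k}\psi\bigr)_{L^2(\Omega_T)} \\
& + i\bigl((V - V_\eps) u_\eps,\; \psi\bigr)_{L^2(\Omega_T)}.
\end{align*}

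The main obstacle will be closing the third term on the right, since the hypothesis permits unbounded spatial derivatives of the coefficients and hence I have no uniform $L^2$-bound on $\d_{x_k} u_\eps$. My remedy is a further integration by parts shifting $\d_{x_k}$ off $u_\eps$ onto $\psi$, converting this term into $-i\sum_k \bigl(u_\eps,\; \d_{x_k}((c_k - c_{k,\eps})\d_{x_k}\psi)\bigr)_{L^2(\Omega_T)}$; now $u_\eps$ enters only through its uniform $L^2$-bound. It then remains to verify $(V - V_\eps)\psi \to 0$ and $\d_{x_k}((c_k - c_{k,\eps})\d_{x_k}\psi) \to 0$ in $L^2(\Omega_T)$ as $\eps \to 0$, which follows from dominated convergence on large balls (smoothness of $c_k, V$ yields uniform convergence of the mollifications and their spatial derivatives on compacts) combined with a tail truncation based on the $L^2$-integrability of $\psi$ and its derivatives and the uniform $L^\infty$ bounds on $c_k, c_{k,\eps}, V, V_\eps$. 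Combined with the standard facts $g_\eps \to g_0$ in $L^2(\R^n)$ and $f_\eps \to f_0$ in $L^2(\Omega_T)$, all four terms on the right of the Green identity vanish in the limit, yielding $(u_\eps - w, \phi)_{L^2(\Omega_T)} \to 0$ for every $\phi \in \D(\Omega_T)$, hence $u_\eps \to w$ in $\D'(\Omega_T)$ and $u \approx w$.
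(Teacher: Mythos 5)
Your decision to mollify the coefficients is where the argument breaks down. The hypotheses give $L^\infty$-control only of $c_k$, $V$ and of their first-order \emph{time} derivatives; the spatial derivatives $\d_{x_k}c_k$, $\d_{x_k}V$ may be unbounded on $\Omega_T$, and this defeats both steps you need to close the commutator term. First, the asserted regularity $\d_{x_k}^2\psi\in L^2(\Omega_T)$ for the backward adjoint solution is not ``propagation of regularity'': a global-in-$x$ second-order energy estimate for the adjoint problem requires bounds on $\d_x c_k$, $\d_x V$, and the variational method only delivers $\psi\in C([0,T],H^1)$ here. Second, and independently, $\d_{x_k}\bigl((c_k-c_{k,\eps})\d_{x_k}\psi\bigr)$ expands as $\bigl(\d_{x_k}c_k-(\d_{x_k}c_k)*\rho_\eps\bigr)\d_{x_k}\psi+(c_k-c_{k,\eps})\d^2_{x_k}\psi$; the second summand is dominated by $2\norm{c_k}{L^\infty}|\d^2_{x_k}\psi|$, but the first is not dominated by anything integrable uniformly in $\eps$ --- your tail truncation invokes the uniform $L^\infty$ bounds on $c_k, c_{k,\eps}$, which control the coefficients but not their spatial gradients (take $c_k$ bounded with wildly growing, oscillating gradient at infinity to see the failure). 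So the dominated-convergence step fails exactly on the term produced by your second integration by parts; and the same obstruction is why no uniform $H^1$ bound on $u_\eps$ is available, so neither version of the commutator term closes.

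The difficulty is self-inflicted: in the intended reading of the corollary (and in \cite[Corollary 3.2]{Hoermann:11}, where the paper's proof actually lives), only the data $g_0, f_0$ are regularized while $c_k$ and $V$ are kept fixed. Then $u_\eps$ and $w$ solve the \emph{same} equation with data $(g_\eps,f_\eps)$ and $(g_0,f_0)$, all commutators vanish identically, and the elementary $L^2$ energy identity $\diff{t}\norm{u_\eps(t)-w(t)}{L^2}^2=2\Re\inp{u_\eps-w}{f_\eps-f_0}$ --- which uses only that $c_k$, $V$ are real and bounded --- already gives $\sup_{t\in[0,T]}\norm{u_\eps(t)-w(t)}{L^2}\to 0$ and hence $u\approx w$, with no adjoint problem at all; the cited proof refines this, using the bounded time derivatives of the coefficients and $f_0\in C^1([0,T],L^2)$, to convergence in $C([0,T],H^1(\R^n))$, as the paper notes in the subsection on improving \eqref{VuApprox}. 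If you insist on mollified coefficient representatives, you must either add hypotheses on the spatial derivatives of $c_k$, $V$ or show that this change of representative perturbs $u_\eps$ by $o(1)$ in $L^2$ --- which is again a forward energy estimate on a difference of solutions, not a duality argument.
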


\bibliography{SDiffraction}
\bibliographystyle{abbrv}

\end{document}